\documentclass{article}
\usepackage{graphicx} 
\usepackage{amsthm}
\usepackage{amsmath}
\usepackage{amsfonts}
\usepackage{mathrsfs}
\usepackage{amssymb}
\usepackage{cancel}\usepackage{authblk}
\usepackage[hidelinks]{hyperref} 
\usepackage{simplewick}
\usepackage{tikz}
\usepackage{circuitikz}
\usetikzlibrary{ decorations.markings}
\usetikzlibrary{cd}
\usepackage{mathtools}

\usepackage[nameinlink, capitalise]{cleveref} 
\usepackage[numbers,compress]{natbib}

\newtheorem{theorem}{Theorem}[section]
\theoremstyle{definition}
\newtheorem{definition}[theorem]{Definition}
\newtheorem{lemma}[theorem]{Lemma}

\newtheorem{example}[theorem]{Example}
\theoremstyle{remark}

\title{The causal set reduction formula}

\author{Stav Zalel}
\affil{Department of Applied Mathematics and Theoretical Physics,
University of Cambridge, CB3 0WA, United Kingdom \\ and Homerton College, University of Cambridge, CB2 8PQ, United Kingdom}
\begin{document}

\maketitle

\begin{abstract}
    We derive a reduction formula for matrix elements on a causal set background. We derive an infinite tower of relations between correlators, akin to the Schwinger–Dyson equations of the continuum. Combining these two results we are able to express matrix elements in three different forms: as a path integral and as two distinct sums of correlators. We sketch the form that our method---which circumvents explicit use of differential equation of motion operators---takes in flat continuum spacetime where it provides an alternative expression for the standard LSZ result.
\end{abstract}

\tableofcontents
\newpage
\section{Introduction}
The LSZ reduction formula \cite{LSZ,Lehmann:1957zz} provides an explicit connection between S-matrix elements and time-ordered correlation functions. It is both a tool for perturbative, diagrammatic computation of measurable quantities and for the formal study of quantum field theory (QFT) through the S-matrix programme where it is used to derive analytic properties of scattering amplitudes \cite{Eden:1966dnq}. Notably, with regards to computation of S-matrix elements, the LSZ formula further cemented the transition from canonical, Hamiltonian methods to covariant methods based on the Lagrangian \cite{Bjorken:1964, Weinberg:1995mt}.

Lagrangian (or path integral) approaches are particularly well-suited to quantum gravity and quantum cosmology \cite{Sorkin:1997gi}. Causal set theory is an approach to quantum gravity in the spirit of the path integral where spacetime takes the form of a discrete Lorentzian substratum, a partial order whose elements are indivisible pieces of spacetime \cite{bombelli1987space,Dowker:2013dog,surya2019causal}. The lightcone structure is encoded by the partial order, a causal order that determines for any pair of spacetime events whether one happened before the other or whether the two are spacelike. The discreteness of the theory is inherent in the condition that the number of spacetime events (that is, the number of indivisible spacetime pieces) contained in any causal interval is finite.

Given the guiding principles and practical constraints of causal set theory, it seems natural (if not essential) to formulate QFT on a causal set via a path integral. The first formulation of free scalar QFT on a causal set, however, was in the language of operators and state-vectors \cite{Johnston:2008za,Johnston:2009fr,Sorkin:2017fcp}. Known as the Sorkin-Johnston formalism, and adapted to the continuum in \cite{Sorkin:2017fcp,Afshordi:2012jf, afshordi2012ground,Zhu:2022kcf}, this approach shuffles the order in which the ingredients of a QFT usually appear in canonical quantisation. In the Sorkin-Johnston formalism, the starting point is the retarded Green function, from which the causal propagator and then the Wightman function are obtained, fixing the free theory via Wick's theorem. Finally, a Fock space and field operators are obtained by diagonalising the Wightman function. Indeed, the only use of the equation of motion in the continuum version of this formalism is to obtain the retarded Green function. On a causal set, the Sorkin-Johnston formalism enables one to make headway without an equation of motion at all by writing down a Green function according to one of the various prescriptions in the literature \cite{Johnston:2008za,X:2017jal,Dable-Heath:2019sej, Aslanbeigi:2014zva,Kastrati:2025etw}---a necessary step, since the discrete, Lorentzian structure of a causal set does not readily furnish differential operators.

Subsequently, the path integral form for free scalar QFT on a causal set was written down in \cite{Sorkin:2011pn}. There, the Wightman function (together with the field operators and vacuum state obtained from it) were used to write down a decoherence functional (a bi-measure, in the spirit of the in-in formalism) from which all correlation functions can be obtained by integration. That paper concluded with a proposal for a path integral for \textit{interacting} QFTs on a causal set, realised in \cite{Albertini:2024srq, Jubb2024}.

To summarise the state-of-the-art \cite{Albertini:2024srq}: in-in and in-out correlators have been defined, both via a canonical operator approach and a path integral\footnote{A formalism for algebraic QFT on causal sets has been separately proposed in \cite{Dable-Heath:2019sej}.}; their perturbative diagrammatic expansions have been derived; and S-matrix elements have been defined using the canonical operator approach---but no path integral formulation for S-matrix elements was known.

Here, we fill this gap by deriving a path integral expression for matrix elements on a causal set. We further derive quantum equations of motion for correlation functions, enabling us to express matrix elements as a sum of correlation functions in a way that resembles the LSZ reduction formula in the continuum.

This paper is structured as follows. Section \ref{sec:prem} provides a brief technical introduction. In Section \ref{sec:integral} we evaluate our central integral, showing that it is equal to a certain sum of correlators. In Section \ref{sec:reduction} we use this integral to give a path integral expression for matrix elements---the causal set reduction formula. In Section \ref{sec:DS} we find that, on a causal set, correlators satisfy an infinite tower of relations, akin to the Schwinger–Dyson equations of the continuum. We give an explicit example of this relation, illustrating how it gives us an additional way to express matrix elements on a causal set. We conclude in Section \ref{sec:discussion} with a comparison with the continuum.

\section{Preliminaries}\label{sec:prem}
We will be concerned with real scalar QFTs on a causal set. The spacetime background on which our QFTs live is a finite, naturally labeled causal set, namely: a partial order on the set $\{1,\ldots,N\}$ for some natural number $N>0$ such that if $p\prec q$ then $p<q$. In other words, the partial order is compatible with the total order of the natural numbers. The integers $1,\ldots, N$ label the spacetime points and can be thought of as our coordinates. The partial order encodes the lightcone structure, where $p\prec q$ means that $p$ is in the causal past of $q$. 

A free QFT is fully determined by its vacuum two-point function, which in our setting takes the form of a complex Hermitian $N\times N$ matrix, $W_{pq}$. We will assume that $W_{pq}$ has been obtained via the Sorkin-Johnston prescription \cite{Sorkin:2017fcp} and can therefore be expressed in operator form as $W_{pq}= \langle 0| \phi_p\phi_q|0\rangle$, where $|0\rangle$ denotes the Sorkin-Johnston vacuum and $\phi_p$ for $p=1,\ldots, N$ are the associated field operators. The field operators can be expanded as,\footnote{The bar over $v^{\lambda}$ and $W$ denotes complex conjugation.}
\begin{equation}\label{free field mode expansion}
    \phi_p=\sum_{\lambda>0} \sqrt{\lambda}\bigg( v^{\lambda}_p a_{\lambda}+\bar{v}^{\lambda}_p a_{\lambda}^{\dagger}\bigg),
\end{equation}
where $v^{\lambda}$ is an eigenvector with eigenvalue $\lambda$ of the causal propagator $W-\overline{W}$,
\begin{equation}
    (W_{pq}-\overline{W}_{pq})\ v_q^{\lambda}=\lambda  v_p^{\lambda},
\end{equation}
and $a_{\lambda},a_{\lambda}^{\dagger}$ are ladder operators satisfying $[a_{\lambda},a_{\mu}^{\dagger}]=\delta_{\lambda,\mu}$ and $a_{\lambda}|0\rangle=0$ for all $\lambda,\mu>0$. The positive eigenvalues and their associated eigenvectors play the role of positive frequencies and mode functions, respectively. The Fock space is built via applications of the creation operators on the vacuum, \newline $|\lambda_1\ldots \lambda_k\rangle=\frac{1}{\sqrt{\lambda_1\ldots \lambda_k}}a_{\lambda_1}^{\dagger}\ldots a_{\lambda_k}^{\dagger}|0\rangle$. The Feynman propagator $\Delta^F$ is defined as,
\begin{equation}
    \Delta^F_{pq}=
    \begin{cases}
     W_{pq} & \text{if $p\succ q$}\\
       W_{qp} & \text{ otherwise},\\
    \end{cases}\end{equation}
where $W_{pq}=W_{qp}$ if $p$ and $q$ are spacelike. Equivalently, $\Delta^F$ is the causally-ordered 2-point function, where causal ordering (our analogue of the continuum's time ordering) orders products of field operators such that if $p\succ q$ then $\phi_p$ is to the left of $\phi_q$. We write $C[\phi_p\phi_q]$ to denote the causal ordering of $\phi_p\phi_q$. 

On top of this free construction, we now add local polynomial interactions in the field. Given a local interaction $V_p$ (a local polymial in $\phi_p$) at all points $p=1,\ldots, N$, the evolution operator $U_{p}$ for $p>1$ is given by the causally ordered product,
\begin{equation}
     U_{p}=C\bigg[\prod_{1\leq  q < p} e^{-iV_q}\bigg].
\end{equation}
Given a local polynomial in the free field, $F(\phi_p)$, the Heisenberg operator, $F^H(\phi_p)$, is defined as,\footnote{When $p=1$, $U_1=1$ and $F^H(\phi_1)=F(\phi_1)$.}
\begin{equation}
     F^H(\phi_p)=U_{p}^{\dagger} F(\phi_p) U_p.
\end{equation}
The S-operator is defined as,
\begin{equation}
     S=C\bigg[\prod_{q=1}^N e^{-iV_q}\bigg].
\end{equation}
The interacting vacuua are defined as $|0\rangle_{in}=|0\rangle$ and $|0\rangle_{out}=S^{\dagger}|0\rangle$, and similarly for particle states, $|\lambda_1\ldots \lambda_k\rangle_{in}=|\lambda_1\ldots \lambda_k\rangle$ and $|\lambda_1\ldots \lambda_k\rangle_{out}=S^{\dagger}|\lambda_1\ldots \lambda_k\rangle$. Due to our choice of vacuum, the overlap of a pair of particle states in the free theory is non-vanishing only if the two states are identical. In the interacting theory, the overlap of a pair of particles is always non-vanishing since there are no spacetime symmetries to dictate transition rules (for instance, no analogue of momentum conservation). Therefore, particle production can come about in the interacting (but not in the free) theory.

There are four objects that will be of most use to us and we define the following shorthand for them.
Let $F=F(\phi_1,\ldots,\phi_N)$ denote a polynomial operator in the free field operators. Then,
\begin{align}
      &\langle F\rangle_{\text{free
      }}:=  \langle 0| C[F]|0\rangle, \label{def:free} \\
       &\langle F\rangle\ \ \ \ \ := \  \frac{_{out}\langle 0| C_H[F]|0\rangle_{in}}{_{out}\langle 0|0\rangle_{in}}\label{def:int},\\
       &\langle \lambda_1\ldots \lambda_k|F|\mu_1\ldots \mu_l\rangle_{\text{free
      }}:=\langle \lambda_1\ldots \lambda_k|C[F]|\mu_1\ldots \mu_l\rangle,\label{def:freemat}\\
       &\langle \lambda_1\ldots \lambda_k|F|\mu_1\ldots \mu_l\rangle\ \ \ \ \ :=\ \frac{ _{out}\langle \lambda_1\ldots \lambda_k|C_H[F]|\mu_1\ldots \mu_l\rangle_{in}}{_{out}\langle 0|0\rangle_{in}},\label{def:intmat}
\end{align}
where $C_H[F]$ denotes the operator obtained by applying causal ordering to $F$ and then replacing every free field with a Heisenberg field. For instance, $C_H[\phi_2\phi_5^2]=(\phi_5^H)^2\phi_2^H$.
These can be computed via Wick's theorem where the contractions are given by $ \contraction{}{\phi_p}{}{\phi_q}\phi_p\phi_q=\Delta^F_{pq}$ and $ \contraction{}{\phi_p}{}{|\lambda\rangle}\phi_p| \lambda\rangle=v_p^{\lambda}$. 

To conclude this section, we review the path integral expression for \eqref{def:free} and \eqref{def:int}. In Section \ref{sec:reduction}  we derive the path integral expression for \eqref{def:intmat} (Theorem \ref{th:lsz}), of which the path integral expressions for \eqref{def:free}-\eqref{def:freemat} are special cases. 

Given a polynomial operator $F=F(\phi_1,\ldots,\phi_N)$,  its \emph{corresponding function} $f=f(\xi_1,\ldots,\xi_N)$ is a function whose dependence on $\xi_i$ is the same as the dependence of $F$ on $\phi_i$, where $\xi=(\xi_1,\ldots,\xi_N)\in\mathbb{R}^N$ is a real scalar field history or configuration. For instance, the corresponding function of $\phi_1^2\phi_3$ is $\xi_1^2\xi_3$.\footnote{ Although formally the map from polynomial operators to their corresponding functions is many-to-one (since the $\xi_p$'s commute while the $\phi_p$'s do not) for us the map will effectively  be a bijection because our operator $F$ will always appear within a causally ordered correlator $\langle F\rangle$.} In particular, write $\mathscr{V}$ to denote the corresponding function for the interaction potential $V=\sum_{p=1}^NV_p$ and $\mathscr{S}$ to denote the corresponding function for the $S$-operator, hence $\mathscr{S}=e^{-i\mathscr{V}}$.
 
When the Feynman propagator matrix $\Delta^F$ is invertible, the path integral expression for in-out correlators is given by \cite{Albertini:2024srq},\footnote{The path integral form of $\langle F\rangle_{\text{free}}$ is obtained from \eqref{eq:pthint} by setting $\mathscr{V}=0$ and noting that $\int d^N\xi \ D=1$.}
\begin{equation}\label{eq:pthint}
\begin{split}
 &\langle  F\rangle
=\frac{\int d^N\xi\  f\mathscr{S}D }{\int d^N\xi\ \mathscr{S} D },
\end{split}
\end{equation}
where the measure $D\equiv D(\xi_1,\ldots,\xi_N)$ takes the form \cite{Sorkin:2011pn},\footnote{For further discussion of the form of the measure see also \cite{
Albertini:2024srq,Jubb2024}.}
\begin{equation}\label{eq:dcf0}
    \begin{split}
    D(\xi_1,\ldots,\xi_N) &= \left\langle\delta(\phi_N-\xi_N)\cdots\delta(\phi_2-\xi_2)\delta(\phi_1-\xi_1)\right\rangle_{\text{free}}\\
    &=\int \frac{d\omega^1}{2\pi} \ldots \frac{d\omega^N}{2\pi} \ e^{-i\omega\cdot\xi-\frac{1}{2}\omega \cdot\Delta^F \cdot\omega } \\
&=\sqrt{\frac{det\mathcal{E}}{(2\pi)^N}}e^{-\frac{1}{2}\xi \cdot {\mathcal{E}} \cdot \xi }. 
    \end{split}
\end{equation}
where $\mathcal{E}\equiv{\Delta^F}^{^{-1}}$ is the inverse of the Feynman propagator matrix. Throughout this work, we assume that $\mathcal{E}$ exists and that $\xi \cdot Re[\mathcal{E}] \cdot \xi\geq 0$ for all $\xi\in\mathbb{R}^N$. For ease of notation we will use the shorthand $\mathcal{E}_k\hspace{-0.5mm} \cdot \hspace{-0.5mm}\phi\equiv \mathcal{E}_{ka}\phi_a$ and $\mathcal{E}_k\hspace{-0.5mm} \cdot \hspace{-0.5mm}\xi\equiv \mathcal{E}_{ka}\xi_a$.

\section{Evaluating the integral $I_{\boldsymbol{\alpha}}(f)$}\label{sec:integral}

This section is dedicated to defining and evaluating the integral $I_{\boldsymbol{\alpha}}(f)$ which we will need for stating and proving our theorems in the subsequent sections.

We begin with some notation. Let $\boldsymbol{\alpha}=(\alpha^1,\ldots,\alpha^N)$ and $\boldsymbol{\beta}=(\beta^1,\ldots,\beta^N)$ be two ordered lists of non-negative integers, \textit{i.e.} $\boldsymbol{\alpha},\boldsymbol{\beta}\in\mathbb{N}^N$. The sum and difference of lists is defined component-wise as $\boldsymbol{\alpha}\pm\boldsymbol{\beta}=(\alpha^1\pm\beta^1,\ldots, \alpha^N\pm\beta^N)$. The size of a list $\boldsymbol{\alpha}$ is $\alpha=\sum_{i=1}^N\alpha^i$. 
We write $\boldsymbol{\alpha}\leq \boldsymbol{\beta}$ if $\alpha^i\leq \beta^i \ \forall \ i$. The zero list in $\mathbb{N}^N$ is $\boldsymbol{0}=(\underbrace{0,\ldots,0}_{N})$.

The factorial of a list $\boldsymbol{\alpha}$ is $\boldsymbol{\alpha}!=\prod_i\alpha^i!$ and the binomial coefficient of two lists is given by $\binom{\boldsymbol{\alpha}}{\boldsymbol{\beta}}=\frac{\boldsymbol{\alpha}!}{\boldsymbol{\beta}!(\boldsymbol{\alpha}-\boldsymbol{\beta})!}=\prod_{i}\binom{\alpha^i}{\beta^i}$. The multinomial of lists is given by  $\binom{\boldsymbol{\alpha}}{\boldsymbol{\beta_0},\boldsymbol{\beta_1},\ldots,\boldsymbol{\beta_n}}=\frac{\boldsymbol{\alpha}!}{\prod_i\boldsymbol{\beta_i}!}$.

A partition $p$ of $\boldsymbol{\alpha}$ into $n$ parts is a multiset of non-vanishing lists, that is a pair $(\mathbb{N}^N\setminus\{\boldsymbol{0}\},m)$ with $m:\mathbb{N}^N\setminus\{\boldsymbol{0}\}\longrightarrow \mathbb{N}$, such that $\sum_{\boldsymbol{\beta}}m(\boldsymbol{\beta})\cdot \boldsymbol{\beta}=\boldsymbol{\alpha}$ and $\sum_{\boldsymbol{\beta}}m(\boldsymbol{\beta})=n$. We write $P_n(\boldsymbol{\alpha})$ to denote the set of partitions of $\boldsymbol{\alpha}$ into $n$ parts. Given a partition $p\in P_n(\boldsymbol{\alpha})$ and an integer $j\in\{1,\ldots,N\}$, we define the set,
\begin{equation}\label{def set I}
    \mathcal{I}_j:=\{\boldsymbol{\beta}\in\mathbb{N}^N | \beta^j>0 \text{ and } m(\boldsymbol{\beta})>0\}.
\end{equation}
The \emph{permutation factor} of a partition $p\in P_n(\boldsymbol{\alpha})$ is $\varpi(p)=\prod_{\boldsymbol{\beta}}m(\boldsymbol{\beta})!\ $. It will sometimes be helpful to label the elements of a partition $p\in P_n(\boldsymbol{\alpha})$, and we will write $p=\{\boldsymbol{\beta_1},\ldots \boldsymbol{\beta_n}\}$. It does not matter which element is $\boldsymbol{\beta_1}$, which is $\boldsymbol{\beta_2},\ldots$ in this sense the partitions are unordered. Whenever we write,
\begin{equation}
\sum_{\substack{P_n(\boldsymbol{\alpha})}}\hspace{-2.5mm}\ g(\boldsymbol{\beta_1},\ldots,\boldsymbol{\beta_n}),
\end{equation}
it is implied that $\{\boldsymbol{\beta_1},\ldots,\boldsymbol{\beta_n}\}\in P_n(\boldsymbol{\alpha}) $. Given a partition $p=\{\boldsymbol{\beta_1},\ldots \boldsymbol{\beta_n}\}$ and some $\boldsymbol{\beta}$ with $m(\boldsymbol{\beta})>0$ we write $\underbrace{\boldsymbol{\beta_1},\ldots, \boldsymbol{\beta_n}}_{\cancel{\boldsymbol{\beta}}}$ to denote the collection of $n-1$ lists constructed from $\boldsymbol{\beta_1},\ldots, \boldsymbol{\beta_n}$ by the removal for a single occurrence of $\boldsymbol{\beta}$.

We denote multivariable partial derivatives of functions and operators, respectively, as follows: 
\begin{equation}
    \begin{split}
&\partial^{\boldsymbol{\alpha}}f=\frac{\partial^{\alpha^1}}{\partial\xi_1}\dots \frac{\partial^{\alpha^N}}{\partial\xi_N}f,\\
&\partial^{\boldsymbol{\alpha}}F=\frac{\partial^{\alpha^1}}{\partial\phi_1}\dots \frac{\partial^{\alpha^N}}{\partial\phi_N}F,
    \end{split}
\end{equation}
with $\partial^{\boldsymbol{0}}=1$. The general Leibniz rule in this notation is,
\begin{equation}
    \partial^{\boldsymbol{\beta}}(fg)=\sum_{\boldsymbol{0}\leq \boldsymbol{\gamma}\leq \boldsymbol{\beta}} \binom{\boldsymbol{\beta}}{\boldsymbol{\beta}-\boldsymbol{\gamma}}\partial^{\boldsymbol{\gamma}}f\partial^{\boldsymbol{\beta}-\boldsymbol{\gamma}}g.
\end{equation}

Now, we define and evaluate the integral $I_{\boldsymbol{\alpha}}(f)$.

\begin{definition}[The integral $I_{\boldsymbol{\alpha}}(f)$]\label{definitionI}
    For any $\boldsymbol{\alpha}\in\mathbb{N}^N$ and function $f=f(\xi_1,\ldots,\xi_N)$, the integral $I_{\boldsymbol{\alpha}}(f)$ is defined as,
    \begin{equation}\label{eq:integral1}
I_{\boldsymbol{\alpha}}(f):=\frac{(-)^\alpha}{\mathcal{N}}\int d^N\xi  \ (f \mathscr{S} \partial^{\boldsymbol{\alpha}}D),
    \end{equation} with $\mathcal{N}\equiv\int d^N\xi\ \mathscr{S} D$.
\end{definition}

\begin{lemma}\label{lemma1}
 Given any list $\boldsymbol{\alpha}\in\mathbb{N}^N$ and polynomial operator $F=F(\phi_1,\ldots,\phi_N)$ with corresponding function $f=f(\xi_1,\ldots,\xi_N)$, and assuming that $\xi \cdot Re[\mathcal{E}] \cdot \xi\geq 0$ for all $\xi\in\mathbb{R}^N$,
 then,   
      \begin{equation}\label{eq:lemma1}
    \begin{split}
I_{\boldsymbol{\alpha}}(f)=\big\langle \partial^{\boldsymbol{\alpha}}F\big\rangle+\sum_{n=1}^\alpha(-i)^n\hspace{-2.5mm}\sum_{\substack{\boldsymbol{\beta_0}< \boldsymbol{\alpha} \\ \alpha-\beta_0\geq n}}\hspace{-0.5mm}\sum_{\substack{P_n(\boldsymbol{\alpha}-\boldsymbol{\beta_0})}}\hspace{-2.5mm}C_{\boldsymbol{\alpha},n}(\boldsymbol{\beta_0},\ldots,\boldsymbol{\beta_n})\ \big\langle \partial^{\boldsymbol{\beta_0}}F\partial^{\boldsymbol{\beta_1}}V\cdots\partial^{\boldsymbol{\beta_n}} V\big\rangle,\\
\end{split}
    \end{equation}
    where it is understood that the sum vanishes when $\alpha=0$ and, 
\begin{equation}\label{eq:lemma_1_coeff}
  C_{\boldsymbol{\alpha},n}(\boldsymbol{\beta_0},\ldots,\boldsymbol{\beta_n})= \frac{1}{\varpi(p)} \binom{\boldsymbol{\alpha}}{\boldsymbol{\beta_0},\boldsymbol{\beta_1},\ldots,\boldsymbol{\beta_n}}.
\end{equation}
\end{lemma}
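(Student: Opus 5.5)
Integrate by parts $\alpha$ times to move all derivatives off $D$ and onto $f\mathscr{S}$, then expand the resulting derivatives of $e^{-i\mathscr{V}}$ as a sum over partitions and read each term as a correlator via \eqref{eq:pthint}.

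\textbf{Step 1 (integration by parts).} Since $D$ is a Gaussian whose exponent has real part $-\tfrac12\xi\cdot Re[\mathcal{E}]\cdot\xi\le 0$, and $f\mathscr{S}$ is a polynomial times a phase, we integrate by parts $\alpha$ times with vanishing boundary terms. This gives
\begin{equation*}
I_{\boldsymbol{\alpha}}(f)=\frac{1}{\mathcal{N}}\int d^N\xi\ D\,\partial^{\boldsymbol{\alpha}}(f\mathscr{S}).
\end{equation*}
The sign $(-)^\alpha$ cancels exactly. The point needing care is the decay: when $Re[\mathcal{E}]$ is only positive semidefinite, the Gaussian factor alone does not control the boundary terms. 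Here I would argue as the paper does for \eqref{eq:pthint}, by regulating with $e^{-\epsilon\xi^2}$ (or by working in the Fourier representation of $D$ from \eqref{eq:dcf0}, where $\partial^{\boldsymbol{\alpha}}D$ becomes multiplication by $(-i\omega)^{\boldsymbol{\alpha}}$) and taking $\epsilon\to0$.

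\textbf{Step 2 (Leibniz).} Apply the general Leibniz rule:
\begin{equation*}
\partial^{\boldsymbol{\alpha}}(f\mathscr{S})=\sum_{\boldsymbol{\beta_0}\le\boldsymbol{\alpha}}\binom{\boldsymbol{\alpha}}{\boldsymbol{\beta_0}}\partial^{\boldsymbol{\beta_0}}f\ \partial^{\boldsymbol{\alpha}-\boldsymbol{\beta_0}}\mathscr{S}.
\end{equation*}
The term $\boldsymbol{\beta_0}=\boldsymbol{\alpha}$ gives $\frac1{\mathcal N}\int D\,\mathscr{S}\,\partial^{\boldsymbol{\alpha}}f$. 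By \eqref{eq:pthint}, with the corresponding function of $\partial^{\boldsymbol\alpha}F$ being $\partial^{\boldsymbol\alpha}f$, this equals $\langle\partial^{\boldsymbol\alpha}F\rangle$.

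\textbf{Step 3 (multivariate Faà di Bruno).} For $\boldsymbol{\gamma}=\boldsymbol{\alpha}-\boldsymbol{\beta_0}\neq\boldsymbol 0$, expand the derivative of the exponential:
\begin{equation*}
\partial^{\boldsymbol{\gamma}}e^{-i\mathscr{V}}=e^{-i\mathscr{V}}\sum_{n=1}^{\gamma}(-i)^n\sum_{P_n(\boldsymbol\gamma)}\frac{\boldsymbol\gamma!}{\varpi(p)\,\boldsymbol{\beta_1}!\cdots\boldsymbol{\beta_n}!}\,\partial^{\boldsymbol{\beta_1}}\mathscr V\cdots\partial^{\boldsymbol{\beta_n}}\mathscr V.
\end{equation*}
I would prove this by induction on $\gamma$. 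Applying one more $\partial_j$ either hits $e^{-i\mathscr V}$, which creates a new block $\boldsymbol e_j$ and increases $n$ by one, or hits one of the existing factors $\partial^{\boldsymbol\beta}\mathscr V$, which turns $\boldsymbol\beta$ into $\boldsymbol\beta+\boldsymbol e_j$. One then checks that the coefficients recombine. This recombination is the main obstacle: tracking the multiplicities $m(\boldsymbol\beta)$ and the factor $\varpi(p)$ when identical blocks arise or are split. A cleaner alternative is a generating-function argument. Write $\partial^{\boldsymbol\gamma}e^{-i\mathscr V}/\boldsymbol\gamma!$ as the coefficient of $t^{\boldsymbol\gamma}$ in $e^{-i\mathscr V(\xi+t)}$. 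Expand $\mathscr V(\xi+t)-\mathscr V(\xi)=\sum_{\boldsymbol\beta\ne\boldsymbol 0}t^{\boldsymbol\beta}\partial^{\boldsymbol\beta}\mathscr V/\boldsymbol\beta!$, and expand the exponential, which factorizes over $\boldsymbol\beta$. Each $\boldsymbol\beta$ appearing $m(\boldsymbol\beta)$ times contributes a $1/m(\boldsymbol\beta)!$, and these factors multiply to $1/\varpi(p)$ directly, with no bookkeeping of orderings. I would use this version.

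\textbf{Step 4 (assembly).} Combine $\binom{\boldsymbol\alpha}{\boldsymbol{\beta_0}}\cdot\frac{(\boldsymbol\alpha-\boldsymbol{\beta_0})!}{\prod_{i\ge1}\boldsymbol{\beta_i}!}=\binom{\boldsymbol\alpha}{\boldsymbol{\beta_0},\ldots,\boldsymbol{\beta_n}}$, which gives exactly $C_{\boldsymbol\alpha,n}$. Then exchange the sums over $\boldsymbol{\beta_0}$ and $n$:
\begin{itemize}
\item The range $1\le n\le\alpha$ with $\boldsymbol{\beta_0}<\boldsymbol\alpha$ and $\alpha-\beta_0\ge n$ is precisely where $P_n(\boldsymbol\alpha-\boldsymbol{\beta_0})\neq\emptyset$.
\item Each integral $\frac1{\mathcal N}\int D\,\mathscr S\,\partial^{\boldsymbol{\beta_0}}f\,\partial^{\boldsymbol{\beta_1}}\mathscr V\cdots\partial^{\boldsymbol{\beta_n}}\mathscr V$ is identified with $\langle\partial^{\boldsymbol{\beta_0}}F\,\partial^{\boldsymbol{\beta_1}}V\cdots\partial^{\boldsymbol{\beta_n}}V\rangle$ by \eqref{eq:pthint}.
\item The operator ordering inside the correlator is immaterial because of causal ordering.
\end{itemize}
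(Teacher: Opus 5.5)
Your argument is correct, but it takes a different route from the paper. The paper argues by induction on $\boldsymbol{\alpha}$. It integrates by parts only once, in a direction $j$ with $\alpha^j\neq0$, to get $I_{\boldsymbol{\alpha}}(f)=I_{\boldsymbol{\alpha}-\boldsymbol{\Delta}}(\partial_j f)-iI_{\boldsymbol{\alpha}-\boldsymbol{\Delta}}(f\,\partial_j\mathscr{V})$. It then substitutes the inductive hypothesis and derives the coefficient recursions \eqref{eq:recursion_n_1} and \eqref{eq:coeff_n_gen}; this is where the sets $\mathcal{I}_j$ and the multiplicity bookkeeping for $\varpi$ enter. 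Finally it verifies that \eqref{eq:lemma_1_coeff} solves these recursions, because the bracketed weights sum to $\alpha^j$.

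You instead move all $\alpha$ derivatives onto $f\mathscr{S}$ at once and read off the coefficients in closed form from Leibniz plus a generating-function Fa\`a di Bruno. The factor $1/\varpi(p)$ is then just the product of the $1/m(\boldsymbol{\beta})!$ from expanding each factor $\exp(-i\,t^{\boldsymbol{\beta}}\partial^{\boldsymbol{\beta}}\mathscr{V}/\boldsymbol{\beta}!)$. The multinomial is $\binom{\boldsymbol{\alpha}}{\boldsymbol{\beta_0}}(\boldsymbol{\alpha}-\boldsymbol{\beta_0})!/\prod_{i\geq1}\boldsymbol{\beta_i}!$. This removes exactly the recombination step you flagged, which is where the paper spends most of its effort, and it is shorter and more transparent.

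The paper's recursion is reused in the proof of Lemma \ref{lemma lemma 2}, but your technique covers that case too. Since $D(\xi+t)=D(\xi)\exp(-t\cdot\mathcal{E}\cdot\xi-\tfrac12\,t\cdot\mathcal{E}\cdot t)$, taking $\boldsymbol{\alpha}!$ times the coefficient of $t^{\boldsymbol{\alpha}}$ gives Lemma \ref{lemma lemma 2} directly, with only blocks of size one and two.

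On boundary terms, the paper does not regulate anything. It simply asserts they vanish ``under our assumption that $Re[\mathcal{E}]$ is positive definite'', which is stronger than the semidefinite hypothesis in the statement. Also, \eqref{eq:pthint} is quoted from earlier work rather than regulated in this paper, so you should not attribute an $e^{-\epsilon\xi^2}$ argument to it. Your caveat about the semidefinite case is fair and applies equally to both proofs. An $e^{-\epsilon\xi^2}$ regulator would still need an argument that the extra $O(\epsilon)$ terms from differentiating it vanish as $\epsilon\to0$.
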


\begin{proof}
When $\alpha=0$, the result follows directly from \eqref{eq:pthint}. Now consider an arbitrary list  $\boldsymbol{\alpha}>0$ and suppose the lemma holds for all lists smaller than $\boldsymbol{\alpha}$.    
Choose some $j\in\{1,\ldots,N\}$ for which $\alpha^j\not=0$. Define a new list $\boldsymbol{\Delta}$ via $\Delta^i=\delta_{ij}$ for all $i=1,\ldots,N$. Integrate \eqref{eq:integral1} by parts with respect to $\boldsymbol{\Delta}$ to obtain,
\begin{equation}\label{I relation}\begin{split}
          I_{\boldsymbol{\alpha}}(f)= I_{\boldsymbol{\alpha}-\boldsymbol{\Delta}}(\partial^{\boldsymbol{\Delta}}f)-iI_{\boldsymbol{\alpha}-\boldsymbol{\Delta}}( f\partial^{\boldsymbol{\Delta}} H)+\frac{(-)^\alpha}{\mathcal{N}}\int d^N\xi  \ \partial^{\boldsymbol{\Delta}}(f \mathscr{S} \partial^{\boldsymbol{\alpha}-\boldsymbol{\Delta}}D),
     \end{split}
 \end{equation}
and note that the boundary term vanishes under our assumption that $Re[\mathcal{E}]$ is positive definite.
Now, apply the inductive assumption to evaluate the RHS of \eqref{I relation} order by order in $-i$.

First, consider $I_{\boldsymbol{\alpha}-\boldsymbol{\Delta}}(\partial^{\boldsymbol{\Delta}}f)$ and relabel $\boldsymbol{\beta_0}+\boldsymbol{\Delta}\rightarrow\boldsymbol{\beta_0}$ to obtain,
 \begin{equation}\label{eq:first term}
    \begin{split}
I_{\boldsymbol{\alpha}-\boldsymbol{\Delta}}(\partial^{\boldsymbol{\Delta}}f)=\big\langle \partial^{\boldsymbol{\alpha}}F\big\rangle+\sum_{n=1}^{\alpha-1}(-i)^n\hspace{-2.5mm}\sum_{\substack{\boldsymbol{\beta_0}< \boldsymbol{\alpha} \\ \alpha-\beta_0\geq n\\\beta_0^j\geq 1}}\hspace{-0.5mm}\sum_{\substack{P_n(\boldsymbol{\alpha}-\boldsymbol{\beta_0})}}\hspace{-2.5mm}C_{\boldsymbol{\alpha},n}&(\boldsymbol{\beta_0}-\boldsymbol{\Delta},\boldsymbol{\beta_1},\ldots,\boldsymbol{\beta_n})\\
&\big\langle \partial^{\boldsymbol{\beta_0}}F\partial^{\boldsymbol{\beta_1}}V\cdots\partial^{\boldsymbol{\beta_n}} V\big\rangle,\\
\end{split}
    \end{equation}
We find that the zeroth order term of $I_{\boldsymbol{\alpha}}(f)$ is given by the first term on the RHS of \eqref{eq:first term}. To obtain the first order term of $I_{\boldsymbol{\alpha}}(f)$, sum the first order term on the RHS of \eqref{eq:first term} together with $-i\langle \partial^{\boldsymbol{\alpha-\Delta}}(F\partial^{\boldsymbol{\Delta}} H)\rangle\subset -iI_{\boldsymbol{\alpha}-\boldsymbol{\Delta}}( f\partial^{\boldsymbol{\Delta}} H)$ to obtain,

\begin{equation}\begin{split}\label{eq:recursion_n_1}
       C_{\boldsymbol{\alpha},1}(\boldsymbol{\beta_0},\boldsymbol{\alpha}-\boldsymbol{\beta_0})=(1-\delta_{0,\beta_0^j})C_{\boldsymbol{\alpha}-\boldsymbol{\Delta},1}(\boldsymbol{\beta_0}-\boldsymbol{\Delta},\boldsymbol{\alpha}-\boldsymbol{\beta_0})+(1-\delta_{\alpha^j,\beta_0^j})\binom{\boldsymbol{\alpha}-\boldsymbol{\Delta}}{\boldsymbol{\beta_0}}.
     \end{split}
 \end{equation}
When $n=1$, the coefficients \eqref{eq:lemma_1_coeff} simplify to $C_{\boldsymbol{\alpha,1}}(\boldsymbol{\beta_0,\alpha-\beta_0})=\binom{\boldsymbol{\alpha}}{\boldsymbol{\beta_0}}$. Plugging this into the RHS of \eqref{eq:recursion_n_1} and checking the base case completes the result for $n=1$.

Now consider the $n$ order term in $-iI_{\boldsymbol{\alpha}-\boldsymbol{\Delta}}( f\partial^{\boldsymbol{\Delta}} H)$ when $2\leq n\leq \alpha$. After applying the Leibniz rule, this term is given by,
 \begin{equation}\label{eq:nterm}
     \begin{split}
         (-i)^n\sum_{\substack{\boldsymbol{\beta_0}<\boldsymbol{\alpha}-\boldsymbol{\Delta} \\ \alpha-\beta_0\geq n}}\ \sum_{P_{n-1}(\boldsymbol{\alpha}-\boldsymbol{\Delta}-\boldsymbol{\beta_0})} &\sum_{0\leq \boldsymbol{\gamma}\leq \boldsymbol{\beta_0}}C_{\boldsymbol{\alpha}-\boldsymbol{\Delta},{n-1}}(\boldsymbol{\beta_0,\ldots,\beta_{n-1}})\binom{\boldsymbol{\beta_0}}{\boldsymbol{\beta_0-\gamma}}\\
         &\hspace{20mm}\big\langle \partial^{\boldsymbol{\gamma}}F\ \partial^{\boldsymbol{\beta_0-\gamma}+\boldsymbol{\Delta}}V\ \partial^{\boldsymbol{\beta_1}}V\ldots\partial^{\boldsymbol{\beta_{n-1}}}V\big\rangle.
     \end{split}
 \end{equation}
  Relabel $\boldsymbol{\beta_0}\leftrightarrow\boldsymbol{\gamma}$, change the order of summation using,
 \begin{equation}
\sum_{\substack{\boldsymbol{\gamma}<\boldsymbol{\alpha}-\boldsymbol{\Delta}\\ \alpha-\gamma\geq n}} \ \ \sum_{\boldsymbol{0}\leq \boldsymbol{\boldsymbol{\beta_0}}\leq \boldsymbol{\gamma}}=\sum_{\substack{\boldsymbol{\beta_0}<\boldsymbol{\alpha}-\boldsymbol{\Delta}\\ \alpha-\beta_0\geq n}}\ \ \sum_{\substack{\boldsymbol{\beta_0}\leq \boldsymbol{\gamma}<\boldsymbol{\alpha}-\boldsymbol{\Delta}\\ \alpha-\Delta-\gamma\geq {n-1}}}\ ,\end{equation}
and rewrite the domain of the first sum on the RHS as $\boldsymbol{\beta_0}<\boldsymbol{\alpha}$ with the additional constraints $\beta_0^j<\alpha^j$ and $\alpha-\beta_0\geq n$. Define $\boldsymbol{\beta_n}:=\boldsymbol{\gamma}-\boldsymbol{\beta_0}+\boldsymbol{\Delta}$ and use this to eliminate $\boldsymbol{\gamma}$. Finally, noting that
for any function $g(\boldsymbol{\beta_{1},\ldots,\beta_{n-1};\boldsymbol{\beta}})$, we have, \begin{equation}
\sum_{\substack{\boldsymbol{\Delta}\leq \boldsymbol{\beta}<\boldsymbol{\alpha}-\boldsymbol{\beta_0}\\ \beta\leq \alpha-\beta_0-{n-1}}}\ \sum_{P_{n-1}(\boldsymbol{\alpha}-\boldsymbol{\beta_0}-\boldsymbol{\beta)}} g(\boldsymbol{\beta_{1},\ldots,\beta_{n-1};\boldsymbol{\beta}})=
\sum_{P_n(\boldsymbol{\alpha}-\boldsymbol{\beta_0})} \ \sum_{\boldsymbol{\beta \in\mathcal{I}_j}}g(\underbrace{\boldsymbol{\beta_1},\ldots, \boldsymbol{\beta_{n}}}_{\cancel{\boldsymbol{\beta}}};\boldsymbol{\beta})
\end{equation}
where $\mathcal{I}_j$ is as defined in \eqref{def set I}
we can rewrite equation \eqref{eq:nterm} as,
 \begin{equation}\label{eq:second term}
 \begin{split}
&(-i)^n\sum_{\substack{\boldsymbol{\beta_0}<\boldsymbol{\alpha}\\ \beta_0^j<\alpha^j\\ \alpha-\beta_0\geq n}}\ \ \sum_{P_n(\boldsymbol{\alpha}-\boldsymbol{\beta_0})} \ \sum_{\boldsymbol{\beta \in\mathcal{I}_j}}\ \Bigg[C_{\boldsymbol{\alpha}-\boldsymbol{\Delta},{n-1}}(\boldsymbol{\beta_0-\Delta+\beta,\underbrace{\beta_1,\ldots, \beta_n}_{\cancel{\beta}}})\\
     &\hspace{55mm}
\binom{\boldsymbol{\beta_0-\Delta+\beta}}{\boldsymbol{\beta_0}}\big\langle \partial^{\boldsymbol{\beta_0}}F\ \partial^{\boldsymbol{\beta_1}}V\ \partial^{\boldsymbol{\beta_{n-1}}}V\ldots\partial^{\boldsymbol{\beta_n}}V\big\rangle \Bigg].\\
\end{split}
\end{equation} The recursion relation for $2\leq n\leq \alpha-1$ is obtained by summing \eqref{eq:second term} with the order $n$ term in \eqref{eq:first term}. The result is,
\begin{equation}\label{eq:coeff_n_gen}
\begin{split}
    & C_{\boldsymbol{\alpha},n}(\boldsymbol{\beta_0},\ldots, \boldsymbol{\beta_n})\\
    =&(1-\delta_{0,\beta_0^j})C_{\boldsymbol{\alpha}-\boldsymbol{\Delta},n}(\boldsymbol{\beta_0}-\boldsymbol{\Delta},\boldsymbol{\beta_1},\ldots, \boldsymbol{\beta_n})\\
      & +(1-\delta_{\alpha^j,\beta_0^j})\sum_{\boldsymbol{\beta \in\mathcal{I}_j}}\ C_{\boldsymbol{\alpha}-\boldsymbol{\Delta},{n-1}}(\boldsymbol{\beta_0-\Delta+\beta,\underbrace{\beta_1,\ldots, \beta_n}_{\cancel{\beta}}})\binom{\boldsymbol{\beta_0-\Delta+\beta}}{\boldsymbol{\beta_0}}.
\end{split}
\end{equation}
To find the form of the $C_{\boldsymbol{\alpha},n}$, substitute the form \eqref{eq:lemma_1_coeff} for the coefficients on the RHS of \eqref{eq:coeff_n_gen}. Noting that,\begin{align}
        &\binom{\boldsymbol{\alpha-\Delta}}{\boldsymbol{\beta_0-\Delta,\beta_1,\ldots,\beta_n}}=\frac{\beta_0^j}{\alpha^j}\binom{\boldsymbol{\alpha}}{\boldsymbol{\beta_0,\beta_1,\ldots,\beta_n}} \text{ when } \beta_0^j>0,\\
        &\binom{\boldsymbol{\alpha-\Delta}}{\boldsymbol{\beta_0-\Delta+\beta,\underbrace{\beta_1,\ldots, \beta_n}_{\cancel{\beta}}}}\binom{\boldsymbol{\beta_0-\Delta+\beta}}{\beta_0}=\frac{\beta^j}{\alpha^j}\binom{\boldsymbol{\alpha}}{\boldsymbol{\beta_0,\beta_1,\ldots,\beta_n}},
\end{align}
we obtain,
\begin{equation}
\begin{split}
   &C_{\boldsymbol{\alpha},n}(\boldsymbol{\beta_0},\ldots, \boldsymbol{\beta_n})\\
     =&\frac{1}{\alpha^j}\binom{\boldsymbol{\alpha}}{\boldsymbol{\beta_0,\beta_1,\ldots,\beta_n}}\bigg[\frac{(1-\delta_{0,\beta_0^j})\beta_0^j}{\varpi(\boldsymbol{\beta_1,\ldots,\beta_n})}+(1-\delta_{\alpha^j,\beta_0^j})\sum_{\boldsymbol{\beta \in\mathcal{I}_j}}\ \frac{\beta^j}{\varpi(\underbrace{\boldsymbol{\beta_1,\ldots, \beta_n}}_{\cancel{\boldsymbol{\beta}}})}\bigg]\\
     =&\frac{1}{\alpha^j}\binom{\boldsymbol{\alpha}}{\boldsymbol{\beta_0,\beta_1,\ldots,\beta_n}}\frac{1}{\varpi(\boldsymbol{\beta_1,\ldots,\beta_n})}\bigg[(1-\delta_{0,\beta_0^j})\beta_0^j+(1-\delta_{\alpha^j,\beta_0^j})\sum_{\boldsymbol{\beta \in\mathcal{I}_j}}\ m(\boldsymbol{\beta}) \cdot \beta^j\bigg],\\
\end{split}
\end{equation}
where in the second line we used the relation $\varpi(\boldsymbol{\beta_1,\ldots,\beta_n})=\varpi(\underbrace{\boldsymbol{\beta_1,\ldots, \beta_n}}_{\cancel{\boldsymbol{\beta}}})m(\boldsymbol{\beta})$. The result follows from noting that the sum of terms in the square brackets is equal to $\alpha^j$ for all choices of $\beta_0^j$.

Finally, set $n=\alpha$ in \eqref{eq:second term}. Note that in this case, $\boldsymbol{\beta_0}=\boldsymbol{0}$ and $\mathcal{I}_j=\{\boldsymbol{\Delta}\}$, so we obtain,
\begin{equation}
 \begin{split}
     C_{\boldsymbol{\alpha},{\alpha}}(\boldsymbol{0,\beta_1,\ldots, \beta_{\alpha}})=C_{\boldsymbol{\alpha}-\boldsymbol{\Delta},{\alpha-1}}(\boldsymbol{0,\underbrace{\beta_1,\ldots, \beta_{\alpha}}_{\cancel{\Delta}}}).
\end{split}
\end{equation}
Note that when $n=\alpha$, the coefficients \eqref{eq:lemma_1_coeff} simplify  to $C_{\boldsymbol{\alpha},\alpha}(\boldsymbol{0},\boldsymbol{\beta_1,\ldots,\beta_{\alpha}})=1$ for all $\boldsymbol{\alpha}$. Verifying the base case completes the proof.
\end{proof}
\section{The causal set reduction formula}\label{sec:reduction}

In this section we present two theorems: the causal set reduction formula (Theorem \ref{th:lsz}) and its special case for distinct particles (Theorem \ref{th:lsz_noncollinear}). We begin with a definition.
\begin{definition}[The symbol $\mathcal{G}^{\boldsymbol{\alpha}}_{\mu_{1}\cdots\mu_l\rightarrow \lambda_{1}\cdots\lambda_k}$]\label{def:G}
    Given a pair of integers $k,l>0$ and a list $\boldsymbol{\alpha}$ with $|k-l|\leq\alpha\leq k+l$, the symbol $\mathcal{G}^{\boldsymbol{\alpha}}_{\mu_{1}\cdots\mu_l\rightarrow \lambda_{1}\cdots\lambda_k}$ is defined as,
    \begin{equation}\begin{split}
\mathcal{G}^{\boldsymbol{\alpha}}_{\mu_{1}\cdots\mu_l\rightarrow \lambda_{1}\cdots\lambda_k}:=\sum_{\substack{S_{\mu}\subseteq \{\mu_{1},\ldots,\mu_l\} \\ |S_{\mu}|=\frac{l+k-\alpha}{2}}}\ \sum_{\substack{S_{\lambda}\subseteq \{\lambda_{1},\ldots,\lambda_k\} \\ |S_{\lambda}|=\frac{l+k-\alpha}{2}}}\ &\sum_{\Upsilon:S_{\mu}\rightarrow S_{\lambda}}\prod_{\mu_i\in S_{\mu}} \frac{\delta_{\mu_i,\Upsilon(\mu_i)}}{\mu_i}\\&\sum_{\psi}\prod_{\mu_i\in S^c_{\mu} }v^{\mu_i}_{\psi(\mu_i)}\prod_{\lambda_j\in S^c_{\lambda}} \bar{v}^{\lambda_j}_{\psi(\lambda_j)},
\end{split}
    \end{equation}
where $\Upsilon:S_{\mu}\rightarrow S_{\lambda}$ is a bijection and $\psi$ is a map $\psi:S_{\mu}^c\cup S_{\lambda}^c \rightarrow\{1,\ldots,N\}$ in which the preimage of $i$ has cardinality $\alpha^i$ for all $i\in\{1,\ldots,N\}$.
\end{definition}

\begin{theorem}[The causal set reduction formula]\label{th:lsz}
    \begin{equation}
\langle \lambda_1\ldots \lambda_k|F|\mu_1\ldots \mu_l\rangle=\sum_{\substack{\boldsymbol{\alpha}\\ |k-l|\leq\alpha\leq k+l}}I_{\boldsymbol{\alpha}}(f)\  \mathcal{G}^{\boldsymbol{\alpha}}_{\mu_{1}\cdots\mu_l\rightarrow \lambda_{1}\cdots\lambda_k}.
\end{equation}
\end{theorem}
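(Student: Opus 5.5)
The strategy is to compute the matrix element on the left operatorially with Wick's theorem, and to compute each $I_{\boldsymbol{\alpha}}(f)$ as a Gaussian integral in which the derivatives $\partial^{\boldsymbol{\alpha}}D$ play the role of external legs. The two expansions should then match term by term.

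First I rewrite the left side. Using $|\cdot\rangle_{out}=S^\dagger|\cdot\rangle$, the numerator of \eqref{def:intmat} becomes $\langle\lambda_1\ldots\lambda_k|S\,C_H[F]|\mu_1\ldots\mu_l\rangle$. By the standard argument of \cite{Albertini:2024srq} (the Heisenberg fields sandwiched by $U_p$ factors recombine), this equals $\langle\lambda_1\ldots\lambda_k|C[F e^{-iV}]|\mu_1\ldots\mu_l\rangle$, which is the free matrix element \eqref{def:freemat} of $FS$.

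Next I apply Wick's theorem with external states, using the normalisation $|\lambda\rangle=\lambda^{-1/2}a^\dagger_\lambda|0\rangle$. External legs come in three kinds:
\begin{itemize}
\item an in-particle $\mu_i$ can be contracted with a field $\phi_q$, giving $v^{\mu_i}_q$;
\item an out-particle $\lambda_j$ can be contracted with a field $\phi_q$, giving $\bar v^{\lambda_j}_q$;
\item an in-particle can be contracted directly with an out-particle, giving $\langle\lambda|\mu\rangle=\delta_{\lambda\mu}/\mu$.
\end{itemize}
The direct pairings are the subsets $S_\mu,S_\lambda$ and the bijection $\Upsilon$ in Definition \ref{def:G}. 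The remaining $\alpha=k+l-2|S_\mu|$ external legs attach to field insertions. Replacing each attached field by a derivative, the standard combinatorial identity gives
\[
\langle\lambda|C[G]|\mu\rangle=\sum \big(\text{external factors}\big)\,\langle \partial^{\boldsymbol{\alpha}}G\rangle_{\text{free}}
\]
with $G=FS$. Here $\boldsymbol{\alpha}$ records how many legs land at each point, so the map $\psi$ with preimage sizes $\alpha^i$ appears, summed over all $\psi$. In one formula,
\[
\langle\lambda_1\ldots\lambda_k|F|\mu_1\ldots\mu_l\rangle=\sum_{\boldsymbol{\alpha}}\mathcal{G}^{\boldsymbol{\alpha}}\,\frac{\langle\partial^{\boldsymbol{\alpha}}(Fe^{-iV})\rangle_{\text{free}}}{\langle e^{-iV}\rangle_{\text{free}}}.
\]
The main care point is the combinatorics: the factorials from differentiating $\phi_q^m$ must cancel against the number of ways of choosing which copy of $\phi_q$ is contracted. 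That is precisely why summing over all maps $\psi$, rather than over multisets, gives the correct weight.

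It remains to identify $\langle\partial^{\boldsymbol{\alpha}}(FS)\rangle_{\text{free}}/\langle S\rangle_{\text{free}}$ with $I_{\boldsymbol{\alpha}}(f)$. Using the free path integral, $\langle \partial^{\boldsymbol{\alpha}}G\rangle_{\text{free}}=\int d^N\xi\,\partial^{\boldsymbol{\alpha}}(f\mathscr{S})\,D$. Integrating by parts $\alpha$ times moves every derivative onto $D$ and produces the sign $(-)^\alpha$; the boundary terms vanish because of the Gaussian decay guaranteed by $\xi\cdot Re[\mathcal{E}]\cdot\xi\ge0$ and invertibility. Dividing by $\mathcal{N}=\langle S\rangle_{\text{free}}$ gives exactly Definition \ref{definitionI}.

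Lemma \ref{lemma1} is the Leibniz expansion of this same quantity into interacting correlators. It is not needed for the proof itself, but it serves as a consistency check.

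I expect the main obstacle to be the operator step: justifying rigorously that the out-state matrix element of $C_H[F]$ becomes the free causally ordered matrix element of $F S$, with the external states contracting only through $v$ and $\bar v$.
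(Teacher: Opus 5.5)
Your proof is correct, and it reaches the result by a shorter route than the paper. The paper also starts from $\langle\lambda_1\cdots\lambda_k|SF|\mu_1\cdots\mu_l\rangle_{\text{free}}/\langle S\rangle_{\text{free}}$. It then takes three further steps:
\begin{enumerate}
\item It expands $S$ in powers of $V$ and counts by hand how the external legs distribute over $F$ and over the individual copies of $V$. The factors $\binom{r}{n}$, $n!/\varpi(p)$ and the multinomial assemble into $C_{\boldsymbol{\alpha},n}$.
\item It resums $S$ to obtain interacting correlators $\langle\partial^{\boldsymbol{\beta_0}}F\,\partial^{\boldsymbol{\beta_1}}V\cdots\partial^{\boldsymbol{\beta_n}}V\rangle$.
\item Only at the end does it identify this bracket with $I_{\boldsymbol{\alpha}}(f)$, by invoking Lemma \ref{lemma1}. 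That lemma is proved by induction on $\boldsymbol{\alpha}$ through recursion relations for the coefficients, and the direct in--out pairings are added afterwards.
\end{enumerate}
You instead keep $FS$ as a single object and fold the in--out pairings into the same Wick expansion. All the leg-counting then collapses into $\partial^{\boldsymbol{\alpha}}(FS)$. One $\alpha$-fold integration by parts of $\int d^N\xi\,\partial^{\boldsymbol{\alpha}}(f\mathscr{S})D$ then lands on Definition \ref{definitionI} directly.

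As a by-product, your route shows that Lemma \ref{lemma1} is just the multivariate Leibniz/Fa\`a di Bruno expansion of $\mathcal{N}^{-1}\int d^N\xi\,\partial^{\boldsymbol{\alpha}}(fe^{-i\mathscr{V}})D$. This explains why $C_{\boldsymbol{\alpha},n}$ is a multinomial divided by $\varpi(p)$. What the paper's route buys is that the correlator form of the matrix element appears along the way rather than as a separate expansion. That form, a sum of interacting correlators of $\partial V$'s, is the ``second form'' the paper advertises and the input to Theorem \ref{thm5.2}.

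Neither route needs more than the other. Both use the identity $S\,C_H[F]=C[SF]$ and the same vanishing of boundary terms. One caveat applies equally to both: strictly, the Gaussian decay you invoke requires $Re[\mathcal{E}]$ to be positive definite. It does not follow from $\xi\cdot Re[\mathcal{E}]\cdot\xi\geq 0$ plus invertibility of $\mathcal{E}$, since for example $\mathcal{E}=i\,\mathbb{1}$ is invertible with $Re[\mathcal{E}]=0$. The paper makes the same tacit strengthening in its proof of Lemma \ref{lemma1}.
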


Before we prove the theorem, it is worth noting that Lemma \ref{lemma1} affords our reduction formula a double interpretation. It is a path integral expression for matrix elements---the missing piece from the formalism to date---and it is also a relation between S-matrix elements and correlation functions. The correlation functions in this case are of operators of the form $\partial V$ (not of an ordered product of $\phi$'s). In Section \ref{sec:DS}, we will gain a third equality by way of Theorem \ref{thm5.2}.

\begin{proof}[Proof of Theorem \ref{th:lsz}]
Note that it follows from the definitions in Section \ref{sec:prem} that, 
\begin{equation}\label{particle_states_equality}
    \langle \lambda_1\cdots\lambda_k|F|\mu_1\cdots\mu_l\rangle=  \frac{\langle \lambda_1\cdots\lambda_k|SF|\mu_1\cdots\mu_l\rangle_{\text{free}}}{\langle S \rangle_{\text{free}}}.
\end{equation}
To evaluate the RHS of \eqref{particle_states_equality}, we begin by enumerating all the arrangements in which every free particle state is contracted with a free field operator in $SF$.
To start, consider the case where all particle states are contracted into fields in $F$. The sum of these contributions is given by,
\begin{equation}\label{eq:0406eve}
\sum_{\substack{\boldsymbol{\alpha}\\\alpha=l+k}} \mathcal{G}^{\boldsymbol{\alpha}}_{\mu_{1}\cdots\mu_l\rightarrow \lambda_{1}\cdots\lambda_k}\frac{\langle S\partial^{\boldsymbol{\alpha}}F\rangle_{\text{free}}}{\langle S\rangle_{\text{free}}}=  \sum_{\substack{\boldsymbol{\alpha}\\\alpha=l+k}} \mathcal{G}^{\boldsymbol{\alpha}}_{\mu_{1}\cdots\mu_l\rightarrow \lambda_{1}\cdots\lambda_k}\langle \partial^{\boldsymbol{\alpha}}F\rangle.
\end{equation}
Each $\boldsymbol{\alpha}$ specifies the spacetime positions of the fields to be contracted with and the $\mathcal{G}^{\boldsymbol{\alpha}}_{\mu_{1}\cdots\mu_l\rightarrow \lambda_{1}\cdots\lambda_k}$ symbol encodes the different permutations in which the mode functions can be contracted into the specified fields. Both $\boldsymbol{\alpha}$ and $\mathcal{G}^{\boldsymbol{\alpha}}_{\mu_{1}\cdots\mu_l\rightarrow \lambda_{1}\cdots\lambda_k}$ treat coincident field operators as indistinguishable. The constant factor coming from the derivative ensures that we obtain the correct counting for distinguishable fields. 

Now consider the remaining contributions where at least one particle state is contracted into $S$. To do this, expand $S$ in powers of $V$ and consider powers greater than 0,

\begin{equation}\label{eq:0406}\begin{split}
\sum_{r=1}^{\infty}\frac{(-i)^r}{r!}\frac{\langle \mu_{1}\cdots\mu_l|FV^r|\lambda_{1}\cdots\lambda_k\rangle_{\text{free}}}{\langle S\rangle_{\text{free}}}.\\
\end{split} 
\end{equation}
In this case we encode the fields to be contracted into by a collection of lists $\boldsymbol{\beta_0},\ldots,\boldsymbol{\beta_n}$ with $n\leq r$, where $\boldsymbol{\beta_0}$ encodes the fields in $F$ that are contracted into and each of $\boldsymbol{\beta_1},\ldots,\boldsymbol{\beta_n}$ is a non-vanishing list that specifies fields in some separate copy of $V$. As above, apply the associated derivatives to remove the contracted fields from their respective operators,
\begin{equation}\label{eq:theorem1:mid}\frac{\big\langle V^{r-n} \partial^{\boldsymbol{\beta_0}}F\partial^{\boldsymbol{\beta_1}}V\cdots\partial^{\boldsymbol{\beta_n}} V \big\rangle_{\text{free}}}{\langle S\rangle_{\text{free}}}.\\
\end{equation}
We now need to count the number of ways that \eqref{eq:theorem1:mid} can come about. There are $\binom{r}{n}$ ways of choosing which of the $V$'s the derivatives will act on, and then there are $\frac{n!}{\varpi(p)}$ ways of distributing the derivatives over the chosen $V$'s, with $p=\{\boldsymbol{\beta_1},\ldots,\boldsymbol{\beta_n}\}$. Once this is fixed, there are $\binom{\boldsymbol{\alpha}}{\boldsymbol{\beta_0},\boldsymbol{\beta_1},\ldots,\boldsymbol{\beta_n}}$  ways of distributing the mode functions between the operators. The final factor needed to treat coincident fields as distinguishable is again provided by the derivatives. Putting this together, we find that \eqref{eq:theorem1:mid} should appear with a coefficient of, \begin{equation}
    \frac{r!}{(r-n)!}C_{\boldsymbol{\alpha},n}(\boldsymbol{\beta_0},\ldots,\boldsymbol{\beta_n}),
\end{equation}
where the $C_{\boldsymbol{\alpha},n}$ are as defined in \eqref{eq:lemma_1_coeff}.

Next, sum these contributions together and find that \eqref{eq:0406} can be rewritten as,
\begin{equation}\begin{split}
   \sum_{r=1}^{\infty}(-i)^r\sum_{\substack{\boldsymbol{\alpha}\\\alpha=l+k}}\mathcal{G}^{\boldsymbol{\alpha}}\sum_{n=1}^{min\{\alpha,r\}} \frac{1}{(r-n)!}\sum_{\substack{\boldsymbol{\beta_0}< \boldsymbol{\alpha} \\ \alpha-\beta_0\geq n}}\hspace{-0.5mm}&\sum_{\substack{P_n(\boldsymbol{\alpha}-\boldsymbol{\beta_0})}}\hspace{-2.5mm}C_{\boldsymbol{\alpha},n}(\boldsymbol{\beta_0},\ldots,\boldsymbol{\beta_n})\ \\
   &\ \ \ \ \ \frac{\big\langle V^{r-n} \partial^{\boldsymbol{\beta_0}}F\partial^{\boldsymbol{\beta_1}}V\cdots\partial^{\boldsymbol{\beta_n}} V \big\rangle_{\text{free}}}{\langle S\rangle_{\text{free}}}.
\end{split} 
\end{equation}
Change the order of summation using,
\begin{equation}
\sum_{r=1}^{\infty}\sum_{\substack{\boldsymbol{\alpha}\\\alpha=l+k}}\sum_{n=1}^{min\{\alpha,r\}} = \sum_{\substack{\boldsymbol{\alpha}\\\alpha=l+k}}\sum_{n=1}^{\alpha} \sum_{r=n}^{\infty},
\end{equation} apply the coordinate transformation $r-n\rightarrow r$ and resum $S$ inside the expectation value. This rids us of the $r$ sum and of the stray factors of $V$ on which no derivatives act. Replace the ratio of free correlators with an interacting one via,

\begin{equation}
\frac{\big\langle S\partial^{\boldsymbol{\beta_0}}F\partial^{\boldsymbol{\beta_1}}V\cdots\partial^{\boldsymbol{\beta_n}} V \big\rangle_{\text{free}}}{\langle S\rangle_{\text{free}}}=\langle\partial^{\boldsymbol{\beta_0}}F\partial^{\boldsymbol{\beta_1}}V\cdots\partial^{\boldsymbol{\beta_n}} V\rangle,
\end{equation}
and add \eqref{eq:0406eve} to the result to obtain,
 \begin{equation}\begin{split}\label{eq0506}
\sum_{\substack{\boldsymbol{\alpha}\\ \alpha=l+k}} \mathcal{G}^{\boldsymbol{\alpha}}_{\mu_{1}\cdots\mu_l\rightarrow \lambda_{1}\cdots\lambda_k}\Bigg[\big\langle \partial^{\boldsymbol{\alpha}}F\big\rangle+\sum_{n=1}^\alpha(-i)^n\hspace{-2.5mm}\sum_{\substack{\boldsymbol{\beta_0}< \boldsymbol{\alpha} \\ \alpha-\beta_0\geq n}}\hspace{-0.5mm}\sum_{\substack{P_n(\boldsymbol{\alpha}-\boldsymbol{\beta_0})}}\hspace{-2.5mm}&C_{\boldsymbol{\alpha},n}(\boldsymbol{\beta_0},\ldots,\boldsymbol{\beta_n})\ \\
&\big\langle \partial^{\boldsymbol{\beta_0}}F\partial^{\boldsymbol{\beta_1}}V\cdots\partial^{\boldsymbol{\beta_n}} V\big\rangle\Bigg]\\
&\hspace{-87mm}=\sum_{\substack{\boldsymbol{\alpha}\\ \alpha=l+k}}I_{\boldsymbol{\alpha}}(f)\  \mathcal{G}^{\boldsymbol{\alpha}}_{\mu_{1}\cdots\mu_l\rightarrow \lambda_{1}\cdots\lambda_k},
\end{split}
\end{equation}
where the last line follows directly from Lemma \ref{lemma1}.

Equation \eqref{eq0506} gives the contribution to the matrix element in which there are no freely propagating particles. To obtain the remaining contributions, return to \eqref{particle_states_equality}, contract $n$ in-particles with $n$ out-particles and repeat the steps above to enumerate the number of ways of contracting the remaining particle states with the fields. The result, after summing over all the ways of contracting $n$ in-particles with $n$ out-particles is,
 \begin{equation}\begin{split}\label{proof_final}
\sum_{\substack{\boldsymbol{\alpha}\\ \alpha=l+k-2n}}I_{\boldsymbol{\alpha}}(f)\  \mathcal{G}^{\boldsymbol{\alpha}}_{\mu_{1}\cdots\mu_l\rightarrow \lambda_{1}\cdots\lambda_k}.
\end{split}
\end{equation}
Summing \eqref{proof_final} over $n\in\{0,1,\ldots,\min\{l,k\}\}$ completes the proof.
\end{proof}

When no in-particle is identical to an out-particle (cf. the non-collinear case in the continuum) Theorem \ref{th:lsz} simplifies and we obtain the following theorem as a direct corollary.
    \begin{theorem}[The causal set reduction formula for distinct particles]\label{th:lsz_noncollinear}\begin{equation}
\langle \lambda_1\ldots \lambda_k|F|\mu_1\ldots \mu_l\rangle=\sum_{\substack{\boldsymbol{\alpha}\\ \alpha= k+l}}\bigg[\sum_{\psi_{\boldsymbol{\alpha}}}\prod_{i=1}^l v^{\mu_i}_{\psi_{\boldsymbol{\alpha}}(\mu_i)}\prod_{j=1}^{k} \bar{v}^{\lambda_j}_{\psi_{\boldsymbol{\alpha}}(\lambda_j)}\bigg]I_{\boldsymbol{\alpha}}(f),
\end{equation}
 where the sum is over all maps $\psi_{\boldsymbol{\alpha}}:\{\mu_{1},\cdots,\mu_l, \lambda_{1},\cdots,\lambda_k\}\rightarrow\{1,\ldots,N\}$ in which the preimage of $i$ has cardinality $\alpha^i$ for all $i\in\{1,\ldots,N\}$.
\end{theorem}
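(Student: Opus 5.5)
This is a direct specialisation of Theorem \ref{th:lsz}; the plan is to show that, under the hypothesis that no in-particle label coincides with an out-particle label, every term with $\alpha<k+l$ vanishes, and that for $\alpha=k+l$ the symbol $\mathcal{G}^{\boldsymbol{\alpha}}$ reduces to the bracketed sum of products of mode functions.

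First I would fix a list $\boldsymbol{\alpha}$ with $|k-l|\leq\alpha\leq k+l$ and set $s=\frac{l+k-\alpha}{2}$, the common size of $S_\mu$ and $S_\lambda$ in Definition \ref{def:G}. If $s\geq 1$, then every bijection $\Upsilon:S_\mu\rightarrow S_\lambda$ pairs some $\mu_i$ with some $\lambda_j=\Upsilon(\mu_i)$. The factor $\delta_{\mu_i,\Upsilon(\mu_i)}$ is then zero, because by assumption $\mu_i\neq\lambda_j$ for all $i,j$. The product $\prod_{\mu_i\in S_\mu}\delta_{\mu_i,\Upsilon(\mu_i)}/\mu_i$ therefore vanishes for every choice of $S_\mu,S_\lambda,\Upsilon$, so $\mathcal{G}^{\boldsymbol{\alpha}}=0$ whenever $\alpha<k+l$. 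Physically, these are exactly the terms in which an in-particle propagates freely to an out-particle, which cannot happen because free overlaps are nonzero only for identical states. In Theorem \ref{th:lsz}, the only surviving terms are thus those with $\alpha=k+l$.

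Second, for $\alpha=k+l$ we have $s=0$. The only choice is then $S_\mu=S_\lambda=\emptyset$, with the unique empty bijection $\Upsilon$ and an empty product equal to $1$. The complements are $S^c_\mu=\{\mu_1,\ldots,\mu_l\}$ and $S^c_\lambda=\{\lambda_1,\ldots,\lambda_k\}$. Hence $\mathcal{G}^{\boldsymbol{\alpha}}$ equals $\sum_{\psi}\prod_{i}v^{\mu_i}_{\psi(\mu_i)}\prod_j\bar v^{\lambda_j}_{\psi(\lambda_j)}$, where $\psi$ ranges over maps from the full label set to $\{1,\ldots,N\}$ with preimage of $i$ of cardinality $\alpha^i$. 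This is precisely the bracket in Theorem \ref{th:lsz_noncollinear}, and substituting it into Theorem \ref{th:lsz} gives the result.

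The only point needing care, and the one I expect to be the main obstacle, is the precise reading of ``distinct''. The formula in Definition \ref{def:G} is indexed by labels that may repeat: for instance, two in-particles might carry the same $\lambda$-value. So I would state the hypothesis explicitly as $\{\mu_1,\ldots,\mu_l\}\cap\{\lambda_1,\ldots,\lambda_k\}=\emptyset$ as sets of eigenvalue labels, which is exactly what makes every Kronecker delta vanish. Repetitions within the in-list or within the out-list play no role, since $\psi$ is defined on the labelled slots, as in the theorem.
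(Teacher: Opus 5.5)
Your proposal is correct and is essentially the paper's argument. The paper presents this theorem as a direct corollary of Theorem \ref{th:lsz}: when no in-particle label equals an out-particle label, every Kronecker delta in $\mathcal{G}^{\boldsymbol{\alpha}}$ vanishes for $\alpha<k+l$, and at $\alpha=k+l$ the symbol reduces to the bracketed sum over $\psi_{\boldsymbol{\alpha}}$. The one case your split omits, $k+l-\alpha$ odd, is harmless: there are then no admissible $S_\mu,S_\lambda$, so $\mathcal{G}^{\boldsymbol{\alpha}}$ is an empty sum.
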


As a consistency check, note that S-matrix elements are obtained by setting $F=1$. In that case, Lemma 1 simplifies to,
\begin{equation}\label{eq:lemma1_f=1}
    \begin{split}
I_{\boldsymbol{\alpha}}(1)=&\sum_{n=1}^\alpha(-i)^n\sum_{\substack{P_n(\boldsymbol{\alpha})}}\frac{1}{\varpi(p)} \binom{\boldsymbol{\alpha}}{\boldsymbol{\beta_1},\ldots,\boldsymbol{\beta_n}} \ \big\langle \partial^{\boldsymbol{\beta_1}}V\cdots\partial^{\boldsymbol{\beta_n}} V\big\rangle.\\
\end{split}
    \end{equation}
Setting $V=0$ trivially recovers the free theory, and similarly setting $k=l=0$ recovers the vacuum in-out correlator of the interacting theory.

\section{Quantum equations of motion}\label{sec:DS}
In this section we present Theorem \ref{thm5.2}, giving a tower of relations satisfied by correlation functions on a causal set. We begin with some notation and a lemma.

For any symmetric matrix $M$ and list $\boldsymbol{\beta}$ satisfying $\beta=2$, we use the following shorthand to denote a matrix element of $M$:
\begin{equation}
        M_{\boldsymbol{\beta}}=\begin{cases}
M_{ij}& \text{ if }\beta^i=\beta^j=1\\
M_{ii} & \text{ if } \beta^i=2.\\
\end{cases}
    \end{equation}
Consider a list $\boldsymbol{\alpha}$ with $\alpha=2n$ for some $n\in\mathbb{N}$. We will be interested in the partitions of $\boldsymbol{\alpha}$ into $n$ lists of size 2, \textit{i.e.} a partition $\boldsymbol{\beta_1,\ldots ,\beta_n}$ of $\boldsymbol{\alpha}$ with $\beta_i=2$ for all $i$. We denote the set of all such partitions of $\boldsymbol{\alpha}$ as $\mathscr{P}^{(2)}(\boldsymbol{\alpha})$.

 \begin{lemma}\label{lemma lemma 2}
Given some list $\boldsymbol{\alpha}\in\mathbb{N}^N$ with $\alpha>1$, \begin{equation}
    \begin{split}
\partial^{\boldsymbol{\alpha}}D=D \Bigg[ \prod_{m=1}^N  (&-\mathcal{E}_m\hspace{-0.5mm} \cdot \hspace{-0.5mm}\xi)^{\alpha^m}\\
+&\sum_{n=1}^{\lfloor \frac{\alpha}{2}\rfloor}\sum_{\substack{\boldsymbol{\beta_0}< \boldsymbol{\alpha} \\ \alpha-\beta_0= 2n}}\hspace{-0.5mm}\sum_{\substack{\mathscr{P}^{(2)}(\boldsymbol{\alpha}-\boldsymbol{\beta_0})}}\hspace{-5mm}C_{\boldsymbol{\alpha},n}(\boldsymbol{\beta_0},\ldots,\boldsymbol{\beta_n})\ \prod_{r=1}^n (-\mathcal{E}_{\boldsymbol{\beta_r}})\prod_{m=1}^N (-\mathcal{E}_m\hspace{-0.5mm} \cdot \hspace{-0.5mm}\xi)^{\beta_0^m}\Bigg],\\
\end{split}
\end{equation}
where the $C_{\boldsymbol{\alpha},n}$ are as defined in \eqref{eq:lemma_1_coeff} and where it is understood that the sum $\sum_{n=1}^{\lfloor \frac{\alpha}{2}\rfloor}$ is empty when ${\lfloor \frac{\alpha}{2}\rfloor}<1$.
\end{lemma}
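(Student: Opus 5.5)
Since $D\propto e^{-\frac12\xi\cdot\mathcal{E}\cdot\xi}$ with $\mathcal{E}$ symmetric (inverse of the symmetric matrix $\Delta^F$), I will prove the formula by induction on $\alpha$, following the same scheme as the proof of Lemma \ref{lemma1}. The two basic facts are
\begin{equation*}
\partial^{\boldsymbol{\Delta}}D=(-\mathcal{E}_j\cdot\xi)\,D,\qquad \partial^{\boldsymbol{\Delta}}(-\mathcal{E}_m\cdot\xi)=-\mathcal{E}_{mj},
\end{equation*}
where $\boldsymbol{\Delta}$ is the unit list at $j$. The first derivative acting on $D$ therefore produces a linear factor $-\mathcal{E}_j\cdot\xi$. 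The first derivative acting on such a linear factor turns it into a constant $-\mathcal{E}_{mj}$, which is $-\mathcal{E}_{\boldsymbol{\beta}}$ for the size-two list $\boldsymbol{\beta}$ with support at $m,j$. Further derivatives annihilate these constants.

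This structure is exactly that of Lemma \ref{lemma1}: there, $\partial^{\boldsymbol{\Delta}}$ acts on $f$ and on $\mathscr{S}=e^{-i\mathscr{V}}$. Here it acts on $D=\mathrm{const}\cdot e^{-\frac12\xi\cdot\mathcal{E}\cdot\xi}$, whose exponent is quadratic, so in the analogue of \eqref{eq:coeff_n_gen} only blocks of size two survive. I will therefore avoid redoing the combinatorics and proceed in three steps.

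\textbf{Step 1 (base cases).} Check directly that $\partial^{\boldsymbol{\alpha}}D=D\prod_m(-\mathcal{E}_m\cdot\xi)^{\alpha^m}$ plus the correct constant terms for $\alpha=2$. For $\alpha=1$ the sum is empty. For $\alpha=2$ the single extra term is $-\mathcal{E}_{\boldsymbol{\alpha}}$ with $C_{\boldsymbol{\alpha},1}(\boldsymbol 0,\boldsymbol\alpha)=1$.

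\textbf{Step 2 (inductive step).} Fix $j$ with $\alpha^j>0$ and write $\partial^{\boldsymbol\alpha}D=\partial^{\boldsymbol\Delta}\partial^{\boldsymbol\alpha-\boldsymbol\Delta}D$. Apply the inductive hypothesis to $\partial^{\boldsymbol\alpha-\boldsymbol\Delta}D$ and use the Leibniz rule. There are two kinds of terms.
\begin{enumerate}
\item $\partial^{\boldsymbol\Delta}$ hits $D$. This multiplies the term by $(-\mathcal{E}_j\cdot\xi)$, i.e.\ sends $\boldsymbol\beta_0-\boldsymbol\Delta\to\boldsymbol\beta_0$. It is the analogue of the first term in \eqref{eq:coeff_n_gen}, with the factor $(1-\delta_{0,\beta_0^j})$.
\item $\partial^{\boldsymbol\Delta}$ hits one of the $\beta_0^m$ linear factors. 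This produces $\beta_0^m(-\mathcal{E}_{mj})$, i.e.\ a new size-two block $\boldsymbol\beta=\boldsymbol e_m+\boldsymbol\Delta$, and lowers $\boldsymbol\beta_0$ by $\boldsymbol e_m$. It is the second term of \eqref{eq:coeff_n_gen}, with the sum restricted to $\boldsymbol\beta\in\mathcal{I}_j$ of size two. The binomial $\binom{\boldsymbol\beta_0-\boldsymbol\Delta+\boldsymbol\beta}{\boldsymbol\beta_0}$ reduces to $\beta_0^m+1$, or to $\beta_0^j+1$ when $m=j$, and this matches exactly the multiplicity $\beta_0^m$ of the factor in $\boldsymbol\beta_0-\boldsymbol\Delta+\boldsymbol\beta$ that is being differentiated.
\end{enumerate}
Derivatives hitting the constants $\mathcal{E}_{\boldsymbol\beta_r}$ vanish. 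The coefficients therefore satisfy the same recursion \eqref{eq:coeff_n_gen}, with partitions restricted to $\mathscr{P}^{(2)}$.

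\textbf{Step 3 (solve the recursion).} Substitute the ansatz \eqref{eq:lemma_1_coeff} and reuse the two multinomial identities from the proof of Lemma \ref{lemma1}. The bracket again becomes $\beta_0^j+\sum_{\boldsymbol\beta\in\mathcal{I}_j}m(\boldsymbol\beta)\beta^j=\alpha^j$, which holds because the blocks together with $\boldsymbol\beta_0$ partition $\boldsymbol\alpha$.

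The main obstacle is the careful matching of multiplicities in the second kind of term, especially the diagonal case $m=j$ where $\boldsymbol\beta$ has $\beta^j=2$. There I expect the factor $\beta^j=2$ in the bracket to combine correctly with the $\tfrac12$ hidden in the multinomial $\boldsymbol\beta!$. I would verify this case explicitly, together with the boundary case $\beta_0=0$ ($n=\alpha/2$), where all coefficients reduce to $C=1/\varpi(p)\cdot\boldsymbol\alpha!/\prod\boldsymbol\beta_r!$.
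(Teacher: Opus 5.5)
Your proposal is correct and follows essentially the paper's route: induction via one extra derivative $\partial^{\boldsymbol\Delta}$ and the Leibniz rule, showing that the coefficients obey the recursion \eqref{eq:coeff_n_gen} of Lemma~\ref{lemma1} restricted to $\mathscr{P}^{(2)}$, and closing with the bracket identity $\beta_0^j+\sum_{\boldsymbol\beta\in\mathcal{I}_j}m(\boldsymbol\beta)\beta^j=\alpha^j$. The match you need holds because the differentiation factor $\beta_0^m+1$ (with $\boldsymbol\beta_0$ the post-derivative list) equals $\binom{\boldsymbol\beta_0+\boldsymbol e_m}{\boldsymbol\beta_0}$, so the diagonal worry is already settled by the general identity $(\boldsymbol\beta-\boldsymbol\Delta)!=\boldsymbol\beta!/\beta^j$, and your single downward recursion also covers the top term $\boldsymbol\beta_0=\boldsymbol 0$, which the paper (inducting upward from $\boldsymbol\alpha$ to $\boldsymbol\alpha+\boldsymbol\Delta$) checks separately when $\alpha$ is odd.
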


\begin{proof}
    Assume true for all lists less than or equal to $\boldsymbol{\alpha}$. Define $\boldsymbol{\Delta}$ via $\Delta^i=\delta_{ij}$ for all $i=1,\ldots,N$.

    Take the $\partial^{\boldsymbol{\Delta}}\equiv\frac{\partial}{\partial\xi_j}$ derivative of the lemma statement to obtain,

\begin{equation}
    \begin{split}
\partial^{\boldsymbol{\alpha+\Delta}}D=&D \Bigg[ (-\mathcal{E}_j\hspace{-0.5mm} \cdot \hspace{-0.5mm}\xi) \prod_{m=1}^N  (-\mathcal{E}_m\hspace{-0.5mm} \cdot \hspace{-0.5mm}\xi)^{\alpha^m}+\sum_{t=1}^N \alpha^t(-\mathcal{E}_{tj})(-\mathcal{E}_t\hspace{-0.5mm} \cdot \hspace{-0.5mm}\xi)^{\alpha^t-1}\prod_{m\not=t}  (-\mathcal{E}_m\hspace{-0.5mm} \cdot \hspace{-0.5mm}\xi)^{\alpha^m}\\\\
+&\sum_{n=1}^{\lfloor \frac{\alpha}{2}\rfloor}\sum_{\substack{\boldsymbol{\beta_0}< \boldsymbol{\alpha} \\ \alpha-\beta_0= 2n}}\hspace{-0.5mm}\sum_{\substack{\mathscr{P}^{(2)}(\boldsymbol{\alpha}-\boldsymbol{\beta_0})}}\hspace{-5mm}C_{\boldsymbol{\alpha},n}(\boldsymbol{\beta_0},\ldots,\boldsymbol{\beta_n})\prod_{r=1}^n (-\mathcal{E}_{\boldsymbol{\beta_r}})\\ &\bigg((-\mathcal{E}_j\hspace{-0.5mm} \cdot \hspace{-0.5mm}\xi)  \prod_{m=1}^N (-\mathcal{E}_m\hspace{-0.5mm} \cdot \hspace{-0.5mm}\xi)^{\beta_0^m}+\sum_{t=1}^N \beta_0^t(-\mathcal{E}_{tj})(-\mathcal{E}_t\hspace{-0.5mm} \cdot \hspace{-0.5mm}\xi)^{\beta_0^t-1}\prod_{m\not=t}  (-\mathcal{E}_m\hspace{-0.5mm} \cdot \hspace{-0.5mm}\xi)^{\beta_0^m}\bigg)\Bigg].
\end{split}
\end{equation}
    The first term in the square brackets can be rewritten as $\prod_{m=1}^N  (-\mathcal{E}_m\hspace{-0.5mm} \cdot \hspace{-0.5mm}\xi)^{(\alpha+\Delta)^m}$ so gives the first term in the lemma statement. 
In the second term, the sum over $t$ is equivalent to a sum over $\boldsymbol{\beta_0}<\boldsymbol{\alpha}$ with the constraint $\alpha-\beta_0=1$. Therefore it can be written as,
\begin{equation}\label{eq:09062}
    D \sum_{\substack{\boldsymbol{\beta_0}< \boldsymbol{\alpha}+\boldsymbol{\Delta}\\\beta_0^j\leq\alpha^j \\ \alpha+\Delta-\beta_0= 2}}\binom{\boldsymbol{\alpha}}{\boldsymbol{\beta_0}}\big(-\mathcal{E}_{\boldsymbol{\alpha+\Delta}-\boldsymbol{\beta_0}}\big)\prod_{m=1}^N(-\mathcal{E}_m\hspace{-0.5mm} \cdot \hspace{-0.5mm}\xi)^{\beta_0^m},
\end{equation}
where we have used the fact that when $\alpha-\beta_0=1$ then 
$\prod_{t=1}^N (1-\delta_{\alpha^t,\beta_0^t})\alpha^t=\binom{\boldsymbol{\alpha}}{\boldsymbol{\beta_0}}$.
The third term can be written as,
\begin{equation}\label{eq:09061}
    \begin{split}
D \sum_{n=1}^{\lfloor \frac{\alpha}{2}\rfloor}\sum_{\substack{\boldsymbol{\beta_0}< \boldsymbol{\alpha}+\boldsymbol{\Delta} \\ \beta_0^j\geq 1\\ \alpha+\Delta-\beta_0= 2n}}\hspace{-0.5mm}\sum_{\substack{\mathscr{P}^{(2)}(\boldsymbol{\alpha+\Delta}-\boldsymbol{\beta_0})}}\hspace{-5mm}C_{\boldsymbol{\alpha},n}(\boldsymbol{\beta_0-\Delta,\beta_1},\ldots,\boldsymbol{\beta_n})\prod_{r=1}^n (-\mathcal{E}_{\boldsymbol{\beta_r}}) \prod_{m=1}^N (-\mathcal{E}_m\hspace{-0.5mm} \cdot \hspace{-0.5mm}\xi)^{\beta_0^m}.
\end{split}
\end{equation}
By summing \eqref{eq:09062} with the $n=1$ term of \eqref{eq:09061} and then changing variables $\boldsymbol{\alpha}\rightarrow\boldsymbol{\alpha-\Delta}$, we obtain the recurrence relation \eqref{eq:recursion_n_1}.

Similarly to the second term, the final term can be written as,
    \begin{equation}\label{eq09063}
    \begin{split}
D \sum_{n=1}^{\lfloor \frac{\alpha}{2}\rfloor}\sum_{\substack{\boldsymbol{\beta_0}< \boldsymbol{\alpha} \\ \alpha-\beta_0= 2n}}\sum_{\substack{\boldsymbol{\gamma}< \boldsymbol{\beta_0} \\ \beta_0-\gamma=1}}\hspace{-0.5mm}\sum_{\substack{\mathscr{P}^{(2)}(\boldsymbol{\alpha}-\boldsymbol{\beta_0})}}\hspace{-5mm}C_{\boldsymbol{\alpha},n}(\boldsymbol{\beta_0},\ldots,\boldsymbol{\beta_n})\binom{\boldsymbol{\beta_0}}{\boldsymbol{\gamma}}(-\mathcal{E}_{\boldsymbol{\beta_0+\Delta}-\boldsymbol{\gamma}}\big)\prod_{r=1}^{n} (-\mathcal{E}_{\boldsymbol{\beta_r}})\prod_{m=1}^N  (-\mathcal{E}_m\hspace{-0.5mm} \cdot \hspace{-0.5mm}\xi)^{\gamma^m}.
\end{split}
\end{equation}
Relabel $\boldsymbol{\beta_0}\leftrightarrow\boldsymbol{\gamma}$, rearrange the sums using,
\begin{equation}
\sum_{\substack{\boldsymbol{\gamma}< \boldsymbol{\alpha} \\ \alpha-\gamma= 2n}}\sum_{\substack{\boldsymbol{\beta_0}< \boldsymbol{\gamma} \\ \gamma-\beta_0=1}}=\sum_{\substack{\boldsymbol{\beta_0}< \boldsymbol{\alpha} \\ \alpha-\beta_0=2n+1}}\sum_{\substack{\boldsymbol{\beta_0}<\boldsymbol{\gamma}< \boldsymbol{\alpha} \\ \gamma-\beta_0=1}},
\end{equation}
and eliminate $\boldsymbol{\gamma}$ using $\boldsymbol{\beta_{n+1}}:=\boldsymbol{\gamma-\beta_0+\Delta}$. Note that for any list $\boldsymbol{\varpi}$ with $\varpi=2n+1$ and any function $g(\boldsymbol{\beta};\boldsymbol{\beta_{1},\ldots,\beta_{n}})$, we have, \begin{equation}
\sum_{\substack{\boldsymbol{\Delta}<\boldsymbol{\beta}< \boldsymbol{\varpi+\Delta} \\ \beta=2}}
\sum_{\substack{\mathscr{P}^{(2)}(\boldsymbol{\varpi+\Delta-\beta})}} g(\boldsymbol{\beta};\boldsymbol{\beta_{1},\ldots,\beta_{n}})=
\sum_{\substack{\mathscr{P}^{(2)}(\boldsymbol{\varpi+\Delta})}} \sum_{\boldsymbol{\beta}\in \mathcal{I}_j} g(\boldsymbol{\beta};\underbrace{\boldsymbol{\beta_1},\ldots, \boldsymbol{\beta_{n+1}}}_{\cancel{\boldsymbol{\beta}}})
\end{equation}
where $\mathcal{I}_j$ is as defined in \eqref{def set I}, and use this to rewrite equation \eqref{eq09063} as,
\begin{equation}\label{eq10061}
    \begin{split}
D \Bigg[&\sum_{n=2}^{\lfloor \frac{\alpha}{2}\rfloor}\sum_{\substack{\boldsymbol{\beta_0}< \boldsymbol{\alpha+\Delta} \\ \beta_0^j\leq\alpha^j\\\alpha+\Delta-\beta_0= 2n}}\sum_{\substack{\mathscr{P}^{(2)}(\boldsymbol{\alpha}+\Delta-\boldsymbol{\beta_0})}}\sum_{\boldsymbol{\beta}\in\mathcal{I}_j}C_{\boldsymbol{\alpha},n-1}(\boldsymbol{\beta_0+\beta-\Delta},\underbrace{\boldsymbol{\beta_1},\ldots, \boldsymbol{\beta_{n}}}_{\cancel{\boldsymbol{\beta}}})\\
&\hspace{50mm}\binom{\boldsymbol{\beta_0+\beta-\Delta}}{\boldsymbol{\beta_0}}\prod_{r=1}^{n} (-\mathcal{E}_{\boldsymbol{\beta_r}})\prod_{m=1}^N  (-\mathcal{E}_m\hspace{-0.5mm} \cdot \hspace{-0.5mm}\xi)^{\beta_0^m}\\
+&\sum_{\substack{\boldsymbol{\beta_0}< \boldsymbol{\alpha+\Delta} \\ \alpha+\Delta-\beta_0= 2\lfloor \frac{\alpha}{2}\rfloor+2}}\sum_{\substack{\mathscr{P}^{(2)}(\boldsymbol{\alpha+\Delta}-\boldsymbol{\beta_0})}}\sum_{\boldsymbol{\beta}\in\mathcal{I}_j}C_{\boldsymbol{\alpha},n}(\boldsymbol{\beta_0+\beta-\Delta},\underbrace{\boldsymbol{\beta_1},\ldots, \boldsymbol{\beta_{n+1}}}_{\cancel{\boldsymbol{\beta}}})\\
&\hspace{50mm}\binom{\boldsymbol{\beta_0+\beta-\Delta}}{\boldsymbol{\beta_0}}\prod_{r=1}^{n+1} (-\mathcal{E}_{\boldsymbol{\beta_r}})\prod_{m=1}^N  (-\mathcal{E}_m\hspace{-0.5mm} \cdot \hspace{-0.5mm}\xi)^{\beta_0^m}\Bigg].
\end{split}
\end{equation}
Summing the first term of \eqref{eq10061} with the $n>2$ terms of \eqref{eq:09061} and relabeling $\boldsymbol{\alpha+\Delta}\rightarrow\boldsymbol{\alpha}$, we find that the recursion relation for $2\leq n\leq \lfloor \frac{\alpha}{2}\rfloor$ is given by \eqref{eq:coeff_n_gen}. When $\alpha$ is even then this completes the proof, since $\lfloor \frac{\alpha+\Delta}{2}\rfloor=\lfloor \frac{\alpha}{2}\rfloor$ and the second term in \eqref{eq10061} vanishes (since it is a sum over $\boldsymbol{\beta_0}$ with $\beta_0<0$). When $\alpha$ is odd, we get an additional relation from the second term of \eqref{eq10061},
\begin{equation}
\begin{split}
  C_{\boldsymbol{\alpha+\Delta}, \frac{\alpha+1}{2}}(\boldsymbol{0,\boldsymbol{\beta_1},\ldots,\beta_\frac{\alpha+1}{2}})=&\sum_{\boldsymbol{\beta}\in\mathcal{I}_j} C_{\boldsymbol{\alpha},\frac{\alpha-1}{2}}(\boldsymbol{\beta-\Delta},\underbrace{\boldsymbol{\beta_1},\ldots, \boldsymbol{\beta_{\frac{\alpha+1}{2}}}}_{\cancel{\boldsymbol{\beta}}}).
  \end{split}
\end{equation}
Plugging in \eqref{eq:lemma_1_coeff} on the RHS gives,
\begin{equation}
\begin{split}
  C_{\boldsymbol{\alpha+\Delta}, \frac{\alpha+1}{2}}(\boldsymbol{0,\boldsymbol{\beta_1},\ldots,\beta_\frac{\alpha+1}{2}})=\frac{1}{\varpi(\boldsymbol{\boldsymbol{\beta_1},\ldots,\beta_\frac{\alpha+1}{2}})}&\binom{\boldsymbol{\alpha+\Delta}}{\boldsymbol{\beta_1,\ldots,\beta_\frac{\alpha+1}{2}}}\\
  &\frac{1}{\alpha^j+1}\sum_{\boldsymbol{\beta}\in\mathcal{I}_j}m(\boldsymbol{\beta})\boldsymbol{\beta}!.
  \end{split}
\end{equation}
The result follows from noting that since $\boldsymbol{\beta}\in p\in\mathscr{P}^{(2)}(\boldsymbol{\alpha}+\Delta)$ then $\boldsymbol{\beta!}=\beta^j$ and hence $\sum_{\boldsymbol{\beta}\in\mathcal{I}_j}m(\boldsymbol{\beta})\boldsymbol{\beta}!=\alpha^j+1$.
\end{proof}

The following theorem is now immediate.

   \begin{theorem}[Quantum equations of motion on a causal set]\label{thm5.2}
Given a list $\boldsymbol{\alpha}\in\mathbb{N}^N$ and polynomial operator $F=F(\phi_1,\ldots,\phi_N)$, and assuming that \newline $\xi\cdot Re[\mathcal{E}] \cdot\xi\geq 0$ for all $\xi\in\mathbb{R}^N$, then,    \begin{equation}\label{eq:theorem2}
    \begin{split}
&\big\langle F\prod_{k=1}^N  (\mathcal{E}_k\hspace{-0.5mm} \cdot \hspace{-0.5mm}\phi)^{\alpha^k}\big\rangle+\sum_{n=1}^{\lfloor \frac{\alpha}{2}\rfloor}\sum_{\substack{\boldsymbol{\beta_0}< \boldsymbol{\alpha} \\ \alpha-\beta_0= 2n}}\hspace{-0.5mm}\sum_{\substack{\mathscr{P}^{(2)}(\boldsymbol{\alpha}-\boldsymbol{\beta_0})}}C_{\boldsymbol{\alpha},n}(\boldsymbol{\beta_0},\ldots,\boldsymbol{\beta_n})\ \prod_{k=1}^n (-\mathcal{E}_{\boldsymbol{\beta_i}}) \big\langle F\prod_{k=1}^N (\mathcal{E}_k\hspace{-0.5mm} \cdot \hspace{-0.5mm}\phi)^{\beta_0^k}\big\rangle \\&=\big\langle \partial^{\boldsymbol{\alpha}}F\big\rangle+\sum_{n=1}^\alpha(-i)^n\hspace{-2.5mm}\sum_{\substack{\boldsymbol{\beta_0}< \boldsymbol{\alpha} \\ \alpha-\beta_0\geq n}}\hspace{-0.5mm}\sum_{\substack{P_n(\boldsymbol{\alpha}-\boldsymbol{\beta_0})}}\hspace{-2.5mm}C_{\boldsymbol{\alpha},n}(\boldsymbol{\beta_0},\ldots,\boldsymbol{\beta_n})\ \big\langle \partial^{\boldsymbol{\beta_0}}F\partial^{\boldsymbol{\beta_1}}V\cdots\partial^{\boldsymbol{\beta_n}} V\big\rangle,\\
\end{split}
\end{equation}
where the $C_{\boldsymbol{\alpha},n}$ are as defined in \eqref{eq:lemma_1_coeff}.

\end{theorem}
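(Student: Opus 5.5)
The plan is to evaluate the integral $I_{\boldsymbol{\alpha}}(f)$ of Definition \ref{definitionI} in two different ways and equate the results. The right-hand side of \eqref{eq:theorem2} is exactly the expression that Lemma \ref{lemma1} gives for $I_{\boldsymbol{\alpha}}(f)$. This is where the hypothesis $\xi\cdot Re[\mathcal{E}]\cdot\xi\geq 0$ enters, through the vanishing of boundary terms in the integrations by parts. So it suffices to show that $I_{\boldsymbol{\alpha}}(f)$ also equals the left-hand side.

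For the left-hand side, I substitute Lemma \ref{lemma lemma 2} for $\partial^{\boldsymbol{\alpha}}D$ inside the integrand of \eqref{eq:integral1}. (For $\alpha\le 1$ the same formula holds trivially: $\partial^{\boldsymbol{0}}D=D$, and $\partial_jD=-(\mathcal{E}_j\cdot\xi)D$.) This writes $\partial^{\boldsymbol{\alpha}}D$ as $D$ times a polynomial in the combinations $-\mathcal{E}_m\cdot\xi$, with constant coefficients $\prod_r(-\mathcal{E}_{\boldsymbol{\beta_r}})$. The prefactor $(-)^\alpha$ is then absorbed by flipping signs:
\begin{itemize}
\item in the leading term, $(-)^\alpha\prod_m(-\mathcal{E}_m\cdot\xi)^{\alpha^m}=\prod_m(\mathcal{E}_m\cdot\xi)^{\alpha^m}$;
\item in the term labelled by $\boldsymbol{\beta_0}$ with $\alpha-\beta_0=2n$, the factor $(-)^{\alpha}=(-)^{\beta_0}$ flips the $\beta_0$ factors of $-\mathcal{E}_m\cdot\xi$, while the $n$ factors $(-\mathcal{E}_{\boldsymbol{\beta_r}})$ are left unchanged.
\end{itemize}
This reproduces exactly the signs appearing on the left-hand side of \eqref{eq:theorem2}.

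Each resulting term has the form $\frac{1}{\mathcal{N}}\int d^N\xi\, f\,g\,\mathscr{S}D$, where $g=\prod_k(\mathcal{E}_k\cdot\xi)^{\gamma^k}$ is a polynomial. By the path integral formula \eqref{eq:pthint}, this equals $\langle F G\rangle$, with $G$ the operator whose corresponding function is $g$, namely $\prod_k(\mathcal{E}_k\cdot\phi)^{\gamma^k}$. This identification is legitimate because the correlator is causally ordered, so the ordering ambiguity noted in the paper's footnote is harmless. Linearity of the integral then assembles the left-hand side term by term.

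There is no genuine obstacle, since both combinatorial inputs are already established. The points needing care are the sign bookkeeping above and checking convergence. All integrands are polynomials times the Gaussian $D$, so $\int f\, g\,\mathscr{S}D$ converges under the positivity assumption, again taking $\mathscr{S}$ to be bounded.
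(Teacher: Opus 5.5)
Your proposal is correct and follows the paper's proof: you evaluate $I_{\boldsymbol{\alpha}}(f)$ once via Lemma~\ref{lemma lemma 2} plus the path-integral formula \eqref{eq:pthint} to get the left-hand side, and once via Lemma~\ref{lemma1} to get the right-hand side. Your explicit sign check and your remark covering $\alpha\le 1$ (which Lemma~\ref{lemma lemma 2} as stated excludes) fill in details the paper leaves implicit; note only that absolute convergence of the integrals strictly needs $Re[\mathcal{E}]$ positive definite, which is what the paper's proof of Lemma~\ref{lemma1} actually invokes, not merely semi-definite.
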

\begin{proof}
    Consider $I_{\boldsymbol{\alpha}}(f)$ as defined in \eqref{eq:integral1}. Use Lemma \ref{lemma lemma 2} to evaluate $\partial^{\boldsymbol{\alpha}}D$ inside the integrand and apply \eqref{eq:pthint} to evaluate the integral as,
         \begin{equation}\label{eq58}
    \begin{split}
I_{\boldsymbol{\alpha}}(f)=&\big\langle F\prod_{k=1}^N  (\mathcal{E}_k\hspace{-0.5mm} \cdot \hspace{-0.5mm}\phi)^{\alpha^k}\big\rangle\\
&+\sum_{n=1}^{\lfloor \frac{\alpha}{2}\rfloor}\sum_{\substack{\boldsymbol{\beta_0}< \boldsymbol{\alpha} \\ \alpha-\beta_0= 2n}}\hspace{-0.5mm}\sum_{\substack{\mathscr{P}^{(2)}(\boldsymbol{\alpha}-\boldsymbol{\beta_0})}}C_{\boldsymbol{\alpha},n}(\boldsymbol{\beta_0},\ldots,\boldsymbol{\beta_n})\ \prod_{k=1}^n (-\mathcal{E}_{\boldsymbol{\beta_i}}) \big\langle F\prod_{k=1}^N (\mathcal{E}_k\hspace{-0.5mm} \cdot \hspace{-0.5mm}\phi)^{\beta_0^k}\big\rangle.
\end{split}
\end{equation} The theorem then follows directly from Lemma \eqref{lemma1}.
\end{proof}

Choosing $\boldsymbol{\alpha}$ with $\alpha=1$ and $F=\phi_{a_1}\ldots \phi_{a_n}$, we obtain the direct analogue of the continuum's Schwinger-Dyson equations, since in that case the second term on the LHS of \eqref{eq:theorem2} vanishes, $\langle \partial^{\boldsymbol{\alpha}}F\rangle$ gives the contact terms and the final term gives the interaction term. For instance, setting $\boldsymbol{\alpha}=(1,0,\ldots,0)$ and $F=\phi_a$ Theorem \ref{thm5.2} gives,
\begin{equation}
  \mathcal{E}_{1b} \langle \phi_a\phi_b\rangle =\delta_{1a} -i\langle \phi_a\frac{\partial V}{\partial\xi_1}\rangle,
\end{equation}
while the comparable Schwinger-Dyson equation in a theory with Lagrangian $\mathcal{L}=\frac{1}{2}(\partial_{\mu}\phi)^2-\frac{1}{2}m^2\phi^2-V[\phi(x)]$ is,
\begin{equation}\label{SDeq}
  i(\Box_x+m^2)\langle\Omega| \phi(x)\phi(y)|\Omega\rangle =\delta^{(4)}(x-y)-i\Big\langle \Omega\Big| \phi(y)\frac{\delta V}{\delta \phi(x)}\Big|\Omega\Big\rangle.
\end{equation}
This observation further strengthen the notion that $\mathcal{E}$ is in effect the causal set equation of motion operator (up to a factor of $i$, depending on convention). When $\alpha>1$, we can therefore interpret Theorem \ref{thm5.2} as the closed form for repeated applications of the equation of motion operator on a correlation function---the ingredient we needed to bring the causal set reduction formula into a form that resembles the continuum LSZ formula in position space. Thus, Theorem \ref{thm5.2} affords a new interpretation to the reduction formulas of Section \ref{sec:reduction}: S-matrix elements can be obtained from causally-ordered correlators of field operators through the removal of external legs by application of the equation of motion operator $\mathcal{E}$, similarly to the standard interpretation in the continuum. We note two differences from the continuum. First, on a causal set we have an additional interpretation: a matrix element is equal to an alternating sum of field correlators---with all external legs intact---since our $\mathcal{E}$ is a complex matrix, not a differential operator. Second, on the causal set there is an additional term that is responsible for the cancelation of diagrams that contain free propagation, \textit{i.e.} the second term on the LHS of \eqref{eq:theorem2}. We illustrate this cancellation with an explicit example below and discuss the issue of free propagation further in the next Section.

\begin{example}
    Let $\boldsymbol{\alpha}=(4,0,\ldots,0)$. Note that by \eqref{eq58} we have,
    \begin{equation}\label{eq:example}
        I_{\boldsymbol{\alpha}}(1)=\langle (\mathcal{E}_1\cdot\phi)^4\rangle-6\mathcal{E}_{11}\langle (\mathcal{E}_1\cdot\phi)^2\rangle +3\mathcal{E}_{11}^2.
    \end{equation}
Expand the first term on the RHS as,
\begin{spreadlines}{1cm}
\begin{flalign*}
&\mathcal{E}_{1a}\mathcal{E}_{1b}\mathcal{E}_{1c}\mathcal{E}_{1d} \langle \phi_a\phi_b\phi_c\phi_d\rangle=\mathcal{E}_{1a}\mathcal{E}_{1b}\mathcal{E}_{1c}\mathcal{E}_{1d}\Bigg[\bigg( \begin{tikzpicture}[baseline=(current bounding box.center), decoration={markings,
    mark= at position 0.6 with {\arrow{Straight Barb[ length=1mm, width=1mm]}}}
] 
  \tikzstyle{every node}=[circle, draw, fill=black,inner sep=0pt, minimum width=3pt]
  \node (n1) at (9,0) [label= left: $a$]{};
  \node (n2) at (10,0) [label= right: $b$]{};
   \node (n3) at (9,-0.5) [label= left: $c$]{};
  \node (n4) at (10,-0.5) [label= right: $d$]{};
    \draw [](n1) -- (n2)  {} ;
     \draw [](n3) -- (n4)  {} ;
\end{tikzpicture}+2 \ permutations\bigg)\\
&\vspace{20mm}+\bigg(\begin{tikzpicture}[baseline=(current bounding box.center), decoration={markings,
    mark= at position 0.6 with {\arrow{Straight Barb[ length=1mm, width=1mm]}}}
] 
  \tikzstyle{every node}=[circle, draw, fill=black,inner sep=0pt, minimum width=3pt]
  \node (n1) at (9,-0) [label= left: $a$]{};
  \node (n2) at (10,0) [label= right: $d$]{};
   \node (n3) at (9,-0.5) [label= left: $b$]{};
  \node (n4) at (10,-0.5) [label= right: $c$]{};
     \draw [](n4) -- (10,-1.25)  {} ;
       \node (n5) at  (9,-1.25)  [label= below: $-iV$]{};
        \node (n6) at  (10,-1.25)  [label= below: $-iV$]{};
    \draw [](n1) -- (n2)  {} ;
     \draw [](n3) -- (n5)  {} ;
\end{tikzpicture}+5 \ permutations\bigg)+ \bigg(\begin{tikzpicture}[baseline=(current bounding box.center), decoration={markings,
    mark= at position 0.6 with {\arrow{Straight Barb[ length=1mm, width=1mm]}}}
] 
  \tikzstyle{every node}=[circle, draw, fill=black,inner sep=0pt, minimum width=3pt]
  \node (n1) at (9,0) [label= left: $a$]{};
  \node (n2) at (10,0) [label= right: $b$]{};
   \node (n3) at (9,-0.5) [label= left: $c$]{};
  \node (n4) at (10,-0.5) [label= right: $d$]{};
  \node(n5) at (9.5,-1.25) [label= below: $-iV$]{};
    \draw [](n1) -- (n2)  {} ;
     \draw [](n3) -- (n5)  {} ;
       \draw [](n4) -- (n5)  {} ;
\end{tikzpicture}+5 \ permutations\bigg)+X_{abcd}\Bigg] \\
=&3\mathcal{E}_{11}^2+6\mathcal{E}_{11}(-i)^2\langle \partial_1V\partial_1V\rangle+6\mathcal{E}_{11}(-i)\langle\partial_1^2V\rangle+\mathcal{E}_{1a}\mathcal{E}_{1b}\mathcal{E}_{1c}\mathcal{E}_{1d}X_{abcd}
\end{flalign*}
\end{spreadlines}
    where $X_{abcd}\subset\langle \phi_a\phi_b\phi_c\phi_d\rangle$ denotes the sum of diagrams where no pair of external fields is contracted. Similarly expanding the second term on the RHS of \eqref{eq:example} we obtain,
    \begin{equation*}
    -6\mathcal{E}_{11}\langle (\mathcal{E}_1\cdot\phi)^2\rangle=-6\mathcal{E}_{11}^2-6\mathcal{E}_{11}(-i)^2\langle \partial_1V\partial_1V\rangle-6\mathcal{E}_{11}(-i)\langle\partial_1^2V\rangle.
    \end{equation*}
    Plugging these back into $\eqref{eq:example}$ we find that $I_{\boldsymbol{\alpha}}(1)=\mathcal{E}_{1a}\mathcal{E}_{1b}\mathcal{E}_{1c}\mathcal{E}_{1d}X_{abcd}$ and hence $I_{\boldsymbol{\alpha}}(1)$ contains no terms with free propagation.
\end{example}

\section{Comparison with the continuum}\label{sec:discussion}
We conclude with a short summary and a discussion of how our work makes contact with the continuum.

We gave a reduction formula for matrix elements on a causal set (Theorem \ref{th:lsz}) and its special case for distinct particles (Theorem \ref{th:lsz_noncollinear}). Our proofs relied on an expansion of the integral $I_{\boldsymbol{\alpha}}(f)$ (Definition~\ref{definitionI}) in terms of correlators built from interaction potentials (Lemma \ref{lemma1}). A key assumption has been that the equation of motion matrix $\mathcal{E}$ (defined as the inverse of the Feynman propagator) exists and is positive definite. In practice, in the causal set literature today the Feynman propagator is obtained via the Sorkin-Johnston prescription, starting from a freely-chosen retarded propagator. Whether $\mathcal{E}$ exists and has or does not have a given property depends on that initial choice of retarded propagator and on the particular causal set background the QFT lives on. The literature has various prescriptions through which one might choose a retarded propagator for a given background, but which is the ``right'' propagator remains an open question (see \cite{Hinrichsen:2026vmv} for a recent proposal). Our work suggests that a good choice of retarded propagator is one that gives rise to a positive definite $\mathcal{E}$, since this endows the theory with the relationships that one would expect between correlators and matrix elements and enables us to make contact with the continuum. Note that in the continuum, the standard path integral prescription fails in the presence of zero modes and one needs to manage the resulting divergences to obtain a path integral description of the theory (for instance, see \cite{PhysRevD.110.116023} and references therein). Studying how one should manage the divergences in \eqref{eq:pthint} in the presence of zero modes (that is, when $\mathcal{E}$ does not exist) is a worthwhile direction for future research.

The structure of the integral $I_{\boldsymbol{\alpha}}(f)$ and its application to obtaining a reduction formula has the following analogue in the continuum. Consider a simple scalar field theory on Minkowski with generating functional,
\begin{align}
&Z[J]=\int \mathcal{D}\phi\  \exp\Bigg[i\int d^dx\ \bigg( \frac{1}{2}(\partial\phi)^2-\frac{1}{2}m^2\phi^2-V(\phi)+ J(x)\phi(x)\bigg)\Bigg].
\end{align}
Define $\phi_0$ via $(\Box+m^2)\phi_0=J$, complete the square and change variables $\phi\rightarrow \phi+\phi_0$ to obtain,
\begin{align}
&Z[J]=e^{\frac{i}{2}\int d^dx J(x)\frac{1}{\Box+m^2}J(x)}\int \mathcal{D}\phi \ e^{i\int d^dx\ \big(\frac{1}{2}(\partial\phi)^2-\frac{1}{2}m^2\phi^2-V(\phi+\phi_0)\big)}.
\end{align}
Correlators are obtained by taking functional derivatives with respect to $J$. The term outside the integral gives free propagators, while the term inside the integral gives all diagrams that do not contain free propagation. Now recall the textbook LSZ reduction formula for non-collinear particles on Minkowski,
\begin{equation}\label{eq:textbook}
\begin{split}
    &_{out}\langle p_1,\ldots,p_n|q_1,\ldots,q_m\rangle_{in}\\
      &=\int \prod^m_{l=1}d^4x_l \frac{ie^{-iq_l\cdot x_l}(\Box_{x_l}+m^2)}{\sqrt{Z}}\prod^{n+m}_{k=m+1}d^4x_k \frac{ie^{ip_k\cdot x_k}(\Box_{x_k}+m^2)}{\sqrt{Z}}G(x_1,\ldots,x_{m+n}),
\end{split}
  \end{equation}
where $G(x_1,\ldots,x_{m+n})$ is the position space correlation function in the interacting vacuum. One can show that every diagram in $G$ that contains free propagation (that is, a contraction of two external fields) vanishes once plugged into \eqref{eq:textbook} with the upshot that we can replace $G$ with, \begin{equation}\label{cont_gen_int}
 (-i)^{n+m}  \frac{\delta^{n+m} Z_{int}[J]}{\delta J(x_1)\cdots\delta J(x_{n+m})}\bigg|_{J=0},
\end{equation} where we define $Z_{int}$ and $D_{con}$ (the continuum analogue of the causal set decoherence functional $D$, which we will use below) via,
\begin{equation}
    Z_{int}[J]\equiv \int \mathcal{D}\phi \ e^{i\int d^dx\ \big(\frac{1}{2}(\partial\phi)^2-\frac{1}{2}m^2\phi^2-V(\phi+\phi_0)\big)} \equiv \int \mathcal{D}\phi \ D_{con} \ e^{-i\int d^dx\ V(\phi+\phi_0)}.
\end{equation}
The action of the Klein-Gordon operators on \eqref{cont_gen_int} (omitting the factors of $-i$) evaluates to,
\begin{equation}\label{eq:off shell continuum PI int}\begin{split}
&\prod_{j=1}^{m+n} (\Box_j+m^2) \ \frac{\delta^{m+n} Z_{int}[J]}{\delta J(x_1)\cdots\delta J(x_{m+n})}\bigg|_{J=0}\\
&=(-1)^{m+n}\hspace{-2mm}\int \mathcal{D}\phi \ e^{-i\int d^dx V(\phi+\phi_0)}  \frac{\delta^{m+n} D_{con}[\phi]}{\delta \phi(x_1)\cdots\delta \phi(x_{m+n})},
\end{split}
\end{equation}
where we changed variables $J\rightarrow \phi_0$, brought the $\phi_0$ derivatives inside the integral where they can be replaced with $\phi$ derivatives acting solely on the interaction part (since $D_{con}$ is independent of $\phi_0$) and finally integrated by parts. The last line is the continuum analogue of our $I_{\boldsymbol{\alpha}}(1)$. Putting the factors of $-i$ back in and plugging the result into \eqref{eq:textbook}, we find that all factors of $i$ cancel out and we obtain a form analogous to the causal set reduction formula, barring the factors of wavefunction remormlisation which to date do not have a causal set analogue. Note that Section \ref{sec:DS} suggests another way of bringing \eqref{eq:textbook} closer in form to the causal set reduction formula: replace $ \prod_{i=1}^{n+m}(\Box_{x_i}+m^2)\ G(x_1,\ldots,x_{m+n})$ in \eqref{eq:textbook} with ``higher order'' Schwinger-Dyson equations to express matrix elements as a sum of correlators of interaction potentials (cf. \eqref{SDeq}).

The fact that diagrams with free propagation give only vanishing contributions to \eqref{eq:textbook} is a consequence of the Fock space structure,\footnote{It is not a consequence of momentum conservation: these diagrams still vanish if we set some incoming momentum equal to some outgoing momentum in \eqref{eq:textbook}. In the continuum, the contribution of free propagation to collinear scattering comes from an additional boundary term; on the causal set this contribution has been encoded in Definition \ref{def:G}.} \textit{i.e.} of our choice of vacuum and the resulting commutation relations between the ladder operators. This property is preserved in our causal set setup. Our choice of vacuum---the causal set Sorkin-Johnston vacuum---is encoded in the decoherence functional $D$, and therefore in $I_{\boldsymbol{\alpha}}(f)$. Indeed, the form of $I_{\boldsymbol{\alpha}}(f)$ given in Lemma \ref{lemma1} makes it explicit that there is no free propagation in Theorem \ref{th:lsz_noncollinear}, since every mode function is attached to some interaction vertex $\partial V$. To retain this property on a curved background in the continuum, one needs to take care to choose an appropriate ``comoving'' vacuum or else encounter manifest particle production (of the sort that accounts for the CMB power-spectrum from inflation). Our work suggests that the continuum Sorkin-Johnston vacuum on de Sitter \cite{Surya:2018byh} may be a good choice of a comoving vacuum that will retain vanishing contributions from free theory particle production, much like the Bunch-Davies state \cite{Melville:2023kgd}.

\paragraph{Acknowledgment:} The author is indebted to Scott Melville and Andrew Tolley for sharing their time and expertise during the preparation of this work.

\bibliographystyle{unsrt}
\bibliography{bio}
\end{document}